\documentclass[10pt, journal]{IEEEtran}
\usepackage{amsmath}
\usepackage{amssymb}
\usepackage{bm}
\usepackage{mdwmath}
\usepackage{graphicx}
\graphicspath{{../Figure/PDF/}{../Biography/}{../Figure/EPS/}}
\usepackage{xcolor}
\usepackage{subfigure}
\usepackage{stfloats}
\usepackage{makecell}
\usepackage{multirow}
\usepackage{acronym}
\usepackage{enumerate}
\usepackage[noadjust]{cite}
\usepackage{url}
\usepackage[ruled, vlined, linesnumbered]{algorithm2e}
\SetKw{Kw}{Initialization:}
\SetKwRepeat{Repeat}{Repeat}{Until}
\SetKwFor{For}{foreach}{do}{endfor for}%
\SetKwFor{for}{for}{do}{endfor for}%
\usepackage{hyperref}
\usepackage{cases}
\usepackage{bbm}
\usepackage{booktabs}
\usepackage{amsthm}
\theoremstyle{definition}
\newtheorem{definition}{\textit{Definition}}
\newtheorem{theorem}{\textbf{\textit{Theorem}}}

\definecolor{lightblue}{RGB}{220,235,255}

\newcommand\highlightReference[1]{%
  \expandafter\newcommand\csname highlightReference-#1\endcsname{}%
}
\let\oldbibitem\bibitem
\def\bibitem#1 #2\par{%
  \expandafter\ifx\csname highlightReference-#1\endcsname\relax
    \oldbibitem{#1}#2\par
  \else
    \oldbibitem{#1}\highlight{#2}\par
  \fi
}

\newcommand\highlight[1]{\textcolor{blue}{#1}}
\begin{document}
\title{Concurrent Task Orchestration: Hierarchical Trusted Collaborator Group Formation for \\Constrained Resource Allocation}

\author{Jiazhi Chen,~\IEEEmembership{Member,~IEEE,}
		Xianbin Wang,~\IEEEmembership{Fellow,~IEEE,}
  \\and Xuemin (Sherman) Shen,~\IEEEmembership{Fellow,~IEEE}
\thanks{This work was supported in part by the Natural Sciences and Engineering Research Council of Canada (NSERC) Discovery Grants Program under Grants RGPIN-2024-05720 and RGPIN2026-05399, and in part by the Canada Research Chairs Program under Grant CRC-2023-00336. \textit{(Corresponding author: Xianbin Wang.)}}
\thanks{Jiazhi~Chen and Xianbin~Wang are with the Department of Electrical and Computer Engineering, Western University, London, ON N6A 5B9, Canada, (e-mails: jche629@uwo.ca, xianbin.wang@uwo.ca).}
\thanks{Xuemin (Sherman) Shen is with the Department of Electrical and Computer Engineering, University of Waterloo, Waterloo, ON N2L 3G1, Canada (e-mail: sshen@uwaterloo.ca).}}
\maketitle

\begin{abstract}
Effective collaboration among resource-constrained mobile intelligent machines, such as Unmanned Aerial Vehicles (UAVs), is essential for the completion of complex tasks. However, concurrent task demands, limited distributed resources, and dynamic collaborator availabilities make it challenging to satisfy task-specific completion goals in diverse collaborative systems. To address these challenges simultaneously, we propose a new framework of trusted collaborator group formation and the supporting resource allocation strategy for concurrent task orchestration. Specifically, a collaborative task is decomposed into subtasks subject to dynamic resource availability and inter-subtask dependencies. A task utility model is developed for collaborator group evaluation support subtask allocation according to task completion goals and unified value-risk analysis. Based on this model, concurrent task orchestration is formulated as a long-term utility maximization problem under $\alpha$-fairness. To solve the problem efficiently, we develop a hierarchical matching strategy that first clusters tasks and collaborators into coarse-grained sets and subsequently performs fine-grained collaborator group formation and task allocation within matched sets. Task-specific trust is predicted through spatial-temporal Graph Neural Networks (GNNs) and embedded into a fairness-aware Deep Reinforcement Learning (DRL) framework to continuously update the matching policies. Simulation results demonstrate that the proposed strategy improves concurrent task completion performance by 22\% over representative baselines under resource constraints.
\end{abstract}
\begin{IEEEkeywords}
Task orchestration, trust, collaborator group, graph neural networks, deep reinforcement learning, fairness. 
\end{IEEEkeywords}

\maketitle

\section{Introduction}
\IEEEPARstart{E}{mpowered} by the rapid convergence of inherent communication and computing technologies, future wireless networks will support a plethora of applications, including smart manufacturing and intelligent transportation systems, which involve increasingly complex collaborative tasks such as multi-robot assembly or vehicle platoon control \cite{11510260}. Amidst the expeditious technological evolution, limited resources available at each device and growing heterogeneity among tasks necessitate intelligent and reliable collaboration among connected machines for effective task completion \cite{11358800}. For instance, a group of connected intelligent machines (e.g., Unmanned Ground Vehicles (UGVs)) with computing resources and communication capabilities, could be selected to handle tasks delegated from other resource-constrained intelligent machines (e.g., Unmanned Aerial Vehicles (UAVs)) \cite{DRLsurvey}. However, selecting suitable collaborator groups is difficult due to intricate requirements for capability and reliability.
\vspace{-0.2cm}
\subsection{Motivation} \label{sectionImotivation}
Given the heterogeneity and complexity of future collaborative systems, achieving efficient concurrent task orchestration through effective collaboration among intelligent machines under stringent resource constraints can be extremely challenging. This challenge primarily stems from the trade-off between meeting diverse requirements of concurrent tasks, possibility of involving different groups of collaborators for one same task, and drastically increased complexity to optimize overall system efficiency under fairness considerations.

From the perspective of an individual task execution, both the task decomposition into subtasks and the selection of a group of collaborators jointly entail task completion value and risks. On the one hand, task completion value is fulfilled through meeting diverse task completion goals such as latency and energy consumption requirements \cite{11193666}, thus imposing intricate criteria on how tasks are decomposed and collaborator groups are selected. Proper task decomposition supports fine-grained collaboration and aligns subtask-level completion goals with available resources. However, when task completion goals are ambiguously defined and trade-offs between competing goals are improperly balanced, the effectiveness of best-effort strategies such as Quality-of-Service (QoS)-based task offloading is significantly reduced \cite{liu2025large}. Moreover, attacks launched by malicious machines, such as illegitimate collaborators \cite{11614586} or eavesdroppers \cite{zhao2025against}, further degrade task completion value. On the other hand, dynamic task requirements and varying collaborator conditions introduce significant uncertainty for collaborator group formation and task allocation, thus increasing risks for task completion \cite{risk1}. While some existing approaches such as heuristic-based task offloading select suitable collaborator groups, they often rely on static and complete system information~\cite{dynamicfairness1,convexfairness2}, making them ineffective in highly dynamic scenarios.

From the perspective of system orchestration, sharing limited resources of all collaborators among concurrent tasks remains a major challenge in guaranteeing the task completion value and tolerable risks for each task. Specifically, diverse task requirements induce increased dynamics in resource requirements, thereby intensifying uncertain contention among concurrent tasks for limited resources. This uncertainty in resource contention is eventually translated into unpredictable task completion outcomes and value. Thus, effective orchestration of concurrent tasks is needed to allocate limited resources across all completing tasks, aiming to maximize the collective task completion value and minimize associated risks simultaneously. Existing strategies typically formulate concurrent task orchestration as an optimization problem aiming to maximize the cumulative utilities of all concurrent tasks~\cite{liwanggraph}. However, such objectives may lead to disproportionate outcomes, where some tasks could dominate allocated resources while others are persistently underserved. While short-term fairness is often unattainable under resource constraints, enforcing \textit{long-term fairness}, i.e., fair task allocation over time, is essential to prevent persistent resource monopolization.

Task-specific trust serves as a goal-achieving mechanism to direct the selection of reliable collaborators aligned with diverse completion goals of tasks~\cite{chengoal2025}. We extend this concept from a single collaborator evaluation to a group evaluation, ensuring collective capability and reliability of collaborator groups in meeting the global requirements of tasks. Machine Learning (ML) supports adaptive collaborator group selection in dynamic environments by learning complex relationships between task requirements and collaborator conditions~\cite{11523038}. Among ML algorithms, Deep Reinforcement Learning (DRL) particularly supports sequential decision-making under complexity and dynamics, making it well-suited for optimizing long-term outcomes~\cite{QoSDRL,eveattack1}. Thus, the integration of task-specific trust and DRL facilitates the automatic formation and matching of reliable collaborator groups aligned with evolving task goals. However, while DRL performs well under fully observable environments, partial observability of system states complicates such integration. Moreover, DRL typically maximizes long-term rewards, i.e., the cumulative utilities of all tasks \cite{Sarsa}, without considering long-term fairness, leading to imbalanced resource allocation among tasks. In addition, DRL model training in large-scale systems often becomes time-consuming and suffers from slow convergence~\cite{mao2024dl,DRLsurvey}.

Based on these observations, an effective concurrent task orchestration strategy should jointly account for heterogeneous task completion goals, dynamic collaborator conditions, scalable group formation, and long-term fairness, rather than optimizing these aspects in isolation
\vspace{-0.1cm}
\subsection{Related Work}
Reliable task orchestration in dynamic collaborative systems requires trust-aware collaborator selection and fair resource allocation. Accordingly, we review trust-based, learning-based, and fairness-aware task orchestration.

\textit{1) Trust-Based Task Orchestration:}
Trust has become increasingly important in orchestrating tasks, especially when security threats are involved. Beyond using trust scores derived from historical task success and responsiveness to defend against internal attacks in task orchestration \cite{trustattack2}, some studies have leveraged such scores and their resulting trust-based rankings for service provider selection and task offloading \cite{trusttaskoffloading1, trusttaskoffloading2, chen2023rte,joshuagreenjiazhichen}. For graph-structured tasks, \cite{chengoal2025} formulated the task orchestration problem as graph matching optimization to select an optimal set of trusted collaborators for task completion. However, the trust evaluation methods in this work is sensitive to weight settings of different trust factors and depend on prior knowledge, rendering it susceptible to inaccurate or biased evaluations. ML predict trust by capturing specific patterns in trust-related data \cite{11395598}, enabling automatic weight learning and achieving high accuracy with sufficient data. While Graph Neural Networks (GNNs) enable end-to-end trust evaluation from raw graph data \cite{trustgnn,Trustguard}, existing methods are typically designed for fixed graph structures and static trust, limiting their ability to model dynamically evolving, task-specific relationships and incurring high training overhead in large-scale systems, without supporting group-level, task-specific trust.

\textit{2) Learning-Based Task Orchestration:}
Motivated by the capability of DRL for automatic decision-making, extensive studies have been conducted to incorporate DRL in task orchestration \cite{DRLsurvey}, typically categorized into value-based and policy-based methods. For instance, Deep-Q-Network (DQN) learns the control policy via value function optimization \cite{11595216}, whereas deep deterministic policy gradient (DDPG) estimates policy gradients to directly optimize the control policy~\cite{DDPG}. However, these DRL methods rely on complete state information and neglect interactive dependencies between agents. To address these limitations, recent efforts have shifted toward state abstraction techniques, with GNN emerging as a powerful tool for modeling structural and temporal dependencies to boost training, such as adopting GNN to encode cluster topology or extract inter-subtask dependencies \cite{liwanggraph}. However, these GNN-based DRL assume fully observable and trustworthy environments, reducing adaptability to unseen task structures, thus incurring high training overhead. Some studies integrate trust into DRL \cite{trustDRL1, trustDRL2} by leveraging historical trust to guide decision-making. However, in these works, trust is treated as an auxiliary, task-agnostic variable within standard DRL focused on reward maximization, without capturing task-specific features and long-term fairness.

\textit{3) Fairness-Aware Task Orchestration:}
Building on existing fairness metrics, such as Jain’s index, entropy function, and min-max ratio, fairness-aware task orchestration has been widely explored in network optimization. Recent method in \cite{convexfairness2} overlooks long-term variations in task requirements and resource conditions and simplifies fairness optimization through convex assumptions, which is unrealistic in practical scenarios. To guarantee long-term fairness, \cite{dynamicfairness1} and \cite{dynamicfairness2} proposed dynamic resource allocation algorithms that rely on current system states without assuming prior distribution knowledge. However, these studies focus on specific fairness criteria (e.g., max-min or proportional fairness) rather than a general and tunable formulation such as $\alpha$-fairness, which encompasses diverse system fairness objectives. Furthermore, various DRL algorithms have been applied to fair task orchestration in domains such as cognitive radio networks \cite{fairheurstic2}, yet their heuristic design restricts their adaptability to $\alpha$-fairness in optimization. Some DRL algorithms have shown potential in optimizing $\alpha$-fairness \cite{fairmontecarlo}. However, they depend on Monte Carlo-based gradient estimation, resulting in slow convergence and remain decoupled from trust-aware and hierarchical decision-making.

\subsection{Contributions}
Motivated by these observations, we propose a fairness-aware concurrent task orchestration strategy that jointly performs trusted collaborator group formation and task allocation under dynamic and resource-constrained conditions. The contributions of this paper are summarized below.

\textbullet ~We design a dynamic task utility model, rather than relying solely on static performance metrics, to guide collaborator group formation and task allocation aligned with diverse task completion goals. Tasks are decomposed into limited subtasks under resource constraints while preserving inter-subtask dependencies, and collaborator groups are evaluated via a unified value-risk task utility that captures both goal satisfaction and deviation under dynamic conditions. Based on this task utility model, concurrent task orchestration is modeled as a long-term average utility maximization problem under $\alpha$-fairness, enabling joint optimization of performance and fairness beyond conventional formulations.

\textbullet ~We develop a hierarchical matching strategy that forms and matches trusted collaborator groups in a coarse-to-fine manner for scalable and adaptive task orchestration under long-term utility optimization. Instead of selecting individual collaborators, the strategy models group-level capability and reliability for task completion. To improve scalability in large-scale systems, tasks and collaborators are first clustered into coarse-grained sets based on feature similarity, within which fine-grained group formation and subtask allocation are performed, reducing mismatches between dynamic task specifics and varying collaborator conditions.

\textbullet ~To support the hierarchical matching strategy, we integrate task-specific trust prediction with fairness-aware policy learning into a unified framework. Specifically, GNNs infer task-specific trust by capturing spatial-temporal dependencies between tasks and collaborators across both coarse-grained and fine-grained levels. These trust representations are then embedded into a fairness-aware DRL framework to guide long-term matching policies. Each matching process is formulated as a Markov Decision Process (MDP), where predicted trust values serve as abstracted states and fairness-aware rewards are derived by adjusting task utilities based on fairness objectives and historical reward statistics.

The remainder of this paper is structured as follows. Section \ref{sectionii} describes the system model and problem formulation. Section \ref{sectioniii} details our proposed strategy. Simulation results are presented in Section \ref{sectioniv}, and Section \ref{sectionv} concludes the paper.

\section{System Model and Problem Formulation \label{sectionii}}
As shown in Fig.~\ref{fig:pictures1}, we consider a collaborative system with UAVs $\mathcal{M}=\{1,\dots,M\}$, UGVs $\mathcal{N}=\{1,\dots,N\}$, and ground eavesdroppers (Eves) $\mathcal{A}=\{1,\dots,A\}$ intercepting Air-to-Ground (A2G) communications~\cite{communicationchannel,eveattack1}. For UAV $m$, $\tilde{\mathcal N}$ denotes the UGVs within its communication range. Each UAV continuously generates tasks over $\mathcal{T}=\{1,\dots,T\}$ and offloads them to UGV groups, while nearby Roadside Units (RSUs) collect task and collaborator information for decision-making. Major notations are summarized in Table~\ref{tab1111}.

\begin{figure}[!ht]
\centering
\includegraphics[width=0.45\textwidth]{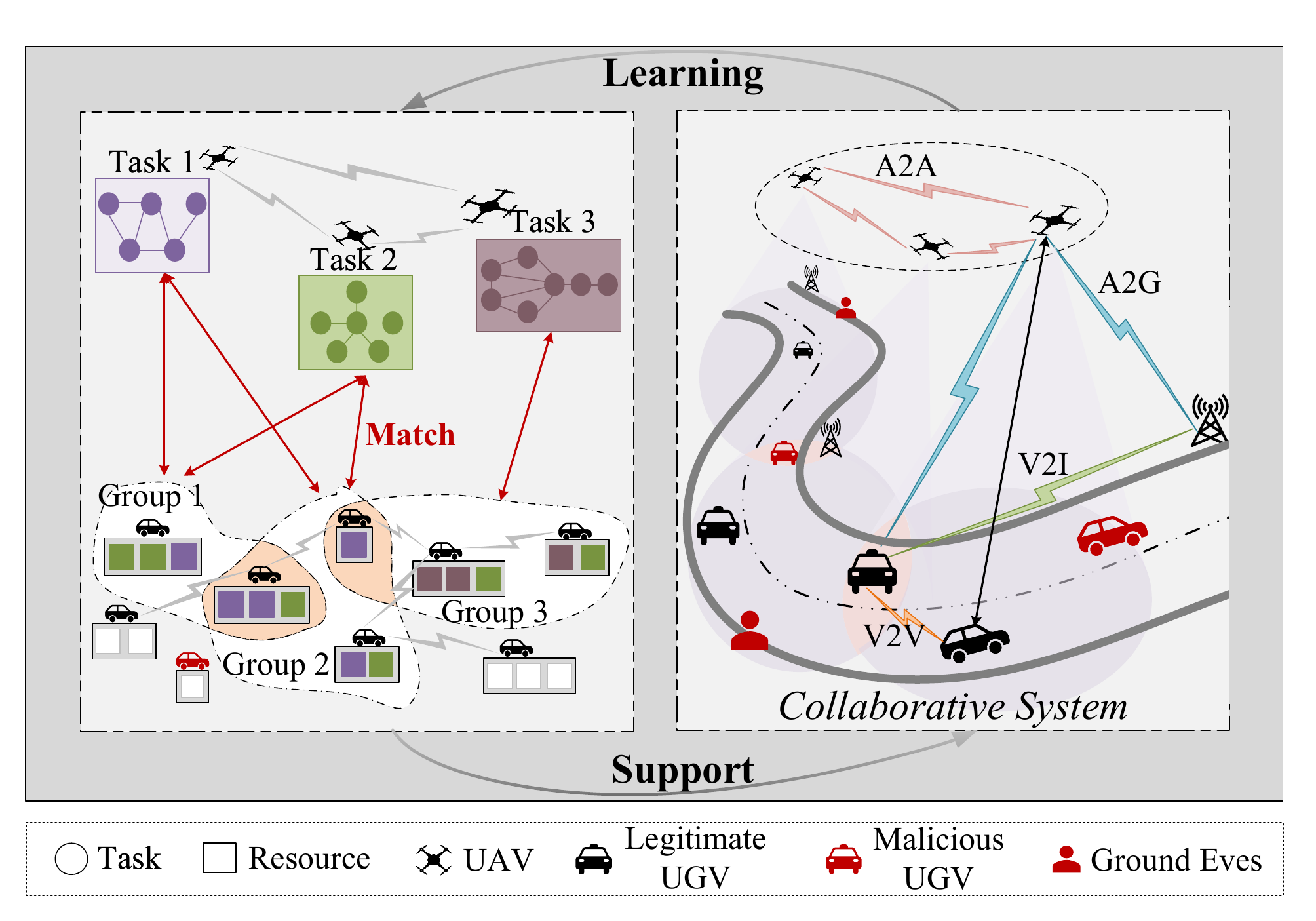}
\caption{An illustration of the UAV-UGV collaborative system, where UAVs offload tasks to trusted UGV groups.}
\label{fig:pictures1}
\end{figure}

\subsection{Modeling Tasks and Resources as Graphs}
Each task generated by UAV $m$ is decomposed into subtasks and modeled as an undirected graph
$\mathcal{G}_m=(\mathcal{S}_m,\mathcal{E}_m,\mathcal{W}_m)$
to preserve inter-subtask dependencies~\cite{liwanggraph,tasktype}, where
$\mathcal{S}_m=\{s_{im}|i\in\{1,\dots,|\mathcal{S}_m|\}\}$,
$\mathcal{E}_m=\{e_m^{ii'}|i,i'\in\{1,\dots,|\mathcal{S}_m|\},i\neq i'\}$,
and
$\mathcal{W}_m=\{w_m^{ii'}|i,i'\in\{1,\dots,|\mathcal{S}_m|\},i\neq i'\}$
denote subtasks, dependencies, and required contact durations, respectively.
For $M$ concurrent tasks,
$\mathcal{S}_{\mathcal M}=\cup_{m\in\mathcal M}\mathcal{S}_m
=\{s_l|l=(i,m)\}$ denotes the set of all subtasks.

Interdependent subtasks require reliable V2V communication among collaborating UGVs. We therefore model the UGV platform as a resource graph
$\mathcal{G}_{\mathcal N}=(\mathcal N,\mathcal E_{\mathcal N},\mathcal W_{\mathcal N})$,
where $\mathcal E_{\mathcal N}$ contains one-hop V2V links and
$\mathcal W_{\mathcal N}$ contains their exponential contact-duration rate parameters~\cite{liwanggraph}.
The resources of UGV $n$ are represented by VMs
$\mathcal V_n=\{v_{jn}|j\in\{1,\dots,|\mathcal V_n|\}\}$~\cite{chengoal2025},
with different CPU capabilities.
The complete resource set is
$\mathcal V_{\mathcal N}=\cup_{n\in\mathcal N}\mathcal V_n
=\{v_k|k=(j,n)\}$.
\begin{table}[!ht]
	\renewcommand{\arraystretch}{1}
        \caption{Major Notations and Definitions}
	\label{tab1111}
	\centering
	\begin{tabular}{ll}
		\hline\hline \textbf{Notation} & \textbf{Definition}\\
		\hline 
        $a$ & An Eve, $a \in \mathcal{A}$\\
        $e^{ii^{'}}_m$& A dependency (edge) between $i$ and $i^{'}$, $e^{ii^{'}}_m \in \mathcal{E}_m$\\
       $e_{nn'}$ & V2V link between $n$ and $n'$, $e_{nn'}\in\mathcal E_{\mathcal N}$\\
        $m$ & A UAV, $m \in \mathcal{M}$\\
      $n$ & A UGV, $n \in \mathcal{N}$\\
         $\hat{n}$& A collaborator within a collaborator cluster, $\hat{n} \in C_y$\\
         $\mathcal{O}_m$& A set of completion goals of a task $\mathcal{G}_m$\\
   $\boldsymbol{o}_{im}$& A set of completion goals of a subtask $s_{im}$\\
         $Q_m$& Utility of individual task $\mathcal{G}_m$\\
         $Q_x$& Utility of a cluster centroid\\
           $Q_p$& Averaged utility of all subtask within cluster $C_x$\\
           $R_m$& Risk of task completion\\
       $s_{im}$& A subtask of a UAV graph task, $s_{im} \in \mathcal S_m$\\
          $s_p$& A subtask within a task cluster, $s_{p} \in C_x$\\
                $U_m$& Value of task completion\\
                $v_k$& An available virtual machine, $v_k \in \mathcal{V}_{\mathcal{N}}$\\
	 $w^{ii^{'}}_m$& Required contact duration between $i$ and $i^{'}$, $w^{ii^{'}}_m \in \mathcal{W}_m$\\
  $w_{nn'}$ & Contact-duration rate parameter, $w_{nn'}\in\mathcal W_{\mathcal N}$\\
   $w_{x \rightarrow y}$& Trust value between $C_x$ and $C_y$\\
   $w_{s_p \rightarrow \hat{n}}$& Trust value between $s_p$ and $\hat{n}$\\
   $x_{im}^n$& Binary task offloading decision between $s_{im}$ and $n$\\
\hline\hline
	\end{tabular}
\end{table}

\subsection{Modeling Value of Task Completion}
As introduced in Section \ref{sectionImotivation}, the task completion value is the benefit derived from meeting diverse task completion goals. Accordingly, a tailored metric, Value of Task Completion~\cite{botaohyper,convexfairness2}, is utilized to evaluate how effectively the selected collaborator group fulfills diverse task completion goals in a unified manner. As the task requestor can directly perceive the time and energy costs of completing its tasks, value of task completion comprises both the value of time and the value of energy. Besides, a nearby RSU transfers its locally perceived reliability of selected collaborator groups to task requestor, which is termed the value of reliability. Thus, the value of task completion for subtask $s_{im}$ and overall task $\mathcal{G}_m$ is formulated as
\begin{equation}
    u_{im}^n=\varsigma_1 u_{im}^{\text{time}}+\varsigma_2 u_{im}^{\text{energy}}+\varsigma_3 u_{im}^{\text{reliability}},
\end{equation}
and
\begin{equation}
    U_{m}=\varsigma_4 U_{m}^{\text{time}}+\varsigma_5 U_{m}^{\text{energy}}+\varsigma_6 U_{m}^{\text{reliability}},
\end{equation}
respectively, where $\varsigma_{1-6}$ are weighting parameters for different task objectives. The task-level completion goals are $\mathcal O_m=[T_m^{\max},E_m^{\max},\Theta_m^{\max}]$, while each subtask has
$\boldsymbol{o}_{im}=[t_{im}^{\max},e_{im}^{\max},\theta_{im}^{\max}]$. The relationship between subtask goals and overall task goals is defined as follows: $T^{\text{max}}_m=\max\limits_{1\leq i\leq |\mathcal{S}_m|}t^{\text{max}}_{im}$, $E^{\text{max}}_m=\sum_{i=1}^{|\mathcal{S}_m|}e^{\text{max}}_{im}$, and $\Theta^{\text{max}}_m=\sum_{i=1}^{|\mathcal{S}_m|}\theta^{\text{max}}_{im}$, where the slowest subtask determines task completion time, while the overall energy consumption and reliability requirements are obtained by aggregating corresponding subtask demands.

\subsubsection{Value of Time}
The completion time $t_{im}^n$ includes transmission and computation delays, as well as V2V data-exchange delay when dependent subtasks are assigned to different collaborators:
\begin{equation} \label{eqsubtasktime}
t_{im}^n \!=\! 
\begin{cases} 
 \sum\limits_{n=1}^{\mathcal{\tilde N}}x_{im}^n\left(\frac{ D_{im}}{\phi_{mn}}+\sum\limits_{n^{'}\in \mathcal{\tilde N}}\Delta \tau_{nn^{'}} +\frac{D_{im}\rho_{im}}{f_{jn}} \right), \\ 
\quad\quad\quad\; \text{if } e^{ii^{'}}_m=1, x_{in} \times x_{i^{'}n^{'}}=1, n \neq n^{'}, \\ 
\sum\limits_{n=1}^{\mathcal{\tilde N}}x_{im}^n\left(\frac{ D_{ik}}{\phi_{mn}}+\frac{D_{im}\rho_{im}}{f_{jn}}\right), \quad \text{otherwise},
\end{cases}
\end{equation}
where $D_{im}$ and $\rho_{im}$ denote the data size and required CPU cycles per bit, and $f_{jn}$ is the CPU frequency of VM $v_{jn}$. As different values of $f_{jn}$ reflect diverse computing capabilities, the set $\boldsymbol{f}_n=\{f_{jn}|j\in \{1, 2,\ldots, |\mathcal{V}_n|\}\}$ denotes all available computing resources of UGV $n$. Besides, $\phi_{mn}=W{\log _2}( {1 + \frac{{p_{m}}{g_{mn}}}{{{\sigma ^2}} }} )$ indicates the data transmission rate between UAV $m$ and the UGV $n$, where $p_{m}$ is the transmit power, $W$ is the bandwidth, and $g_{mn} = g_1 (d^{\text{A2G}}_{mn})^{-\eta}$ represents the channel gain \cite{communicationchannel}. $g_1$ is the reference channel gain at 1 m, $d^{\text{A2G}}_{mn}$ is the A2G distance, and $\eta$ is the path loss exponent. Moreover, the data exchange time $\Delta \tau_{nn^{'}}$ follows an opportunistic V2V contact model which obeys exponential distribution \cite{liwanggraph}, where $\text{Pr} ( \Delta \tau_{nn^{'}}\geq \Delta t|w_{nn^{'}})= e^{-\Delta t\times w_{nn^{'}}}$ and $\Delta t=|\frac{D_{im}}{\phi_{mn}}-\frac{D_{i^{'}m})}{\phi_{mn^{'}}}|+w^{ii^{'}}_m$.

To assess whether the selected collaborator $n$ meets $t^{\text{max}}_{im}$, the value of subtask completion time is defined as
\begin{equation}
u_{im}^{\text{time}} =
\begin{cases} 
1-\frac{t_{im}^n }{t^{\text{max}}_{im}}, & \text{if } t_{im}^n  < t^{\text{max}}_{im}, \\ 
0, & \text{if } t_{im}^n \geq t^{\text{max}}_{im}.
\end{cases}
\end{equation}

The overall completion time $T_m$ of a task $\mathcal{G}_m$ is calculated based on the slowest subtask completion time $t_{im}^n$, denoted as $T_m=\max\limits_{1\leq i\leq |\mathcal{S}_m|}t_{im}^n$. Therefore, the value of overall task completion time is denoted as $U_{m}^{\text{time}}=\min\limits_{1\leq i\leq |\mathcal{S}_m|} u_{im}^{\text{time}}$.

\subsubsection{Value of Energy}
The UAV energy consumption for offloading $s_{im}$ includes transmission and channel-holding energy~\cite{zhao2025against}, which is given by
\begin{equation}
e_{im}^n=\sum_{n=1}^{|\mathcal{\tilde N}|} x_{im}^n \left(\frac{p_{m} D_{im}}{\phi_{mn}} + \ell_{im}\right),
\end{equation}
where $\ell_{im}$ denotes the energy consumed during the channel holding period after transmitting subtask $s_{im}$.

Similar to the value of time, the value of energy for completing the subtask $s_{im}$ is determined by comparing the expected energy consumption $e^{\text{max}}_{im}$ with the actual energy consumption $e_{im}^n$, expressed as
\begin{equation}
u_{im}^{\text{energy}} =
\begin{cases} 
1-\frac{e_{im}^n}{e^{\text{max}}_{im}}, & \text{if } e_{im}^n < e^{\text{max}}_{im}, \\ 
0, & \text{if } e_{im}^n \geq e^{\text{max}}_{im}.
\end{cases}
\end{equation}

The overall energy consumption for offloading a task $\mathcal{G}_m$ is calculated as $E_m = \sum_{i=1}^{|\mathcal{S}_m|} e_{im}^n$ and the corresponding value of energy is denoted as $U_{m}^{\text{energy}}=\sum_{i=1}^{|\mathcal{S}_m|} \frac{e^{\text{max}}_{im}}{E^{\text{max}}_m} u_{im}^{\text{energy}}$.

\subsubsection{Value of Reliability}
In open and dynamic collaborative systems, adversarial infiltration is inevitable, posing security threats to task offloading and completion. We consider two security threats: malicious UGVs launching blackhole or on-off attacks~\cite{attack1,attack2}, and ground Eves intercepting UAV-to-UGV transmissions~\cite{communicationchannel,eveattack1}. Accordingly, collaborator reliability combines identity authentication and transmission secrecy. Specifically, the identity of a selected collaborator $n$ is verified through lightweight continuous authentication locally executed at RSUs \cite{fang2023joint}, with the authentication accuracy $\psi_{n}$ defined as the weighted probability of correctly authenticating the collaborator’s identity, given by  
\begin{equation} \label{autheneq1} 
\psi_{n}= 1-(\epsilon_1 P_{FA} + \epsilon_2 P_{MD}),  
\end{equation}  
where \( P_{FA} = \Pr(\widetilde{\gamma}_{n\hat{t}} \neq \gamma_{n\hat{t}} \mid \psi_0) \) and \( P_{MD} = \Pr(\widetilde{\gamma}_{n\hat{t}} = \gamma_{n\hat{t}} \mid \psi_1) \) denote the false alarm rate and misdetection rate, respectively, and \( \psi_0 \) and \( \psi_1 \) indicate whether $n$ is legitimate or malicious, respectively. The authentication results at time $t \in \mathcal{T}$ are determined as follows:
\begin{equation} 
\phi_n(t)=\left\{\begin{array}{l} \psi_0, \;\text{if } \widetilde{\gamma}_{n\hat{t}}=\gamma_{n\hat{t}}, \!\!\!\quad \forall \hat{t} \in\{1,2, ..., t\} \\ \psi_1, \;\text{if } \widetilde{\gamma}_{n\hat{t}} \neq \gamma_{n\hat{t}}, \!\!\!\quad \exists \hat{t} \in\{1,2, ..., t\}\end{array},\right. 
\end{equation} 
where \( \gamma_{n\hat{t}} \) and \( \widetilde{\gamma}_{n\hat{t}} \) denote the expected and actual access states toward nearby RSUs, respectively, within the access time sequences $\bm{\gamma}_{j} = [\gamma_{n1},\gamma_{n2},\ldots,\gamma_{nT} ]^\top$ and $\widetilde{\bm{\gamma}}_{n} = [\widetilde{\gamma}_{n1},\widetilde{\gamma}_{n2},\ldots,\widetilde{\gamma}_{nT} ]^\top$. $(\cdot)^\top$ denotes vector transposition.

The possibility of maintaining confidential transmission when selecting $n$ for completing $s_{im}$ is quantified as
\begin{equation} \label{secrecyeq}
\eta_{im}^n = \left\{ {\begin{array}{*{20}{l}}
{\frac{{{\delta_{im} R_{mn}^{\text{sec}}} }}{{{D_{im}}}}}&{{\rm{if}}{\mkern 1mu} {\mkern 1mu} \;\;{\delta_{im} R_{mn}^{\text{sec}}}  < D_{im}}\\
{1}&{{\rm{if}}{\mkern 1mu} {\mkern 1mu} \;\;{\delta_{im} R_{mn}^{\text{sec}}}  \ge D_{im}}
\end{array}}, \right.
\end{equation}
where $\delta_{im}=t^{\text{max}}_{im}-\frac{D_{im}\rho_{im}}{f_{jn}}$ 
is the maximum transmission time and \begin{small}$R_{mn}^{\text{sec}} = \underset{\substack{||g_{ma}-\hat{g}_{ma}|| \leq \Delta g_1 \\ ||g_{na}-\hat{g}_{na}|| \leq \Delta g_2}}{\min} \left[ \phi_{mn} - \underset{a \in \mathcal{A}}{\max} \phi_{ma} \right]^+$\end{small} is the worst-case achievable secrecy rate~\cite{eveattack1}. Here, \begin{small}$\phi_{ma}=W{{\rm{lo}}{{\rm{g}}_2}( {1 + \frac{{{p_{m}}g_{ma}^{\text{max}}}}{p_n^{\text{jam}}g_{na}^{\text{min}}+ {\sigma ^2}}})}$\end{small}, where $g_{ma}^{\max}=\hat g_{ma}+\Delta g_1$ and
$g_{na}^{\min}=\hat g_{na}-\Delta g_2$ characterize the worst-case channels, and $p_n^{\rm jam}$ denotes the jamming power.

Based on (\ref{autheneq1}) and (\ref{secrecyeq}), the reliability of the selected UGV $n$ in completing subtask $s_{im}$ is denoted as
\begin{equation}
\theta_{im}^n = \sum\limits_{n=1}^{|\mathcal{\tilde N}|} x_{im}^n \left(\zeta_1 \psi_{n}+ \zeta_2\eta_{im}^n \right),
\end{equation}
where $\zeta_1$ and $\zeta_2$ indicate the weight parameters that reflect the relative importance of authentication and secrecy.

Due to varying sensitivities of subtasks to security threats, the reliability requirements for the selected collaborator differ. Therefore, the value of reliability of a subtask $s_{im}$ is designed to adapt to task-specific requirements, given by
\begin{equation}
u_{im}^{\text{reliability}} =
\begin{cases} 
\frac{\theta_{im}^n}{\theta^{\text{max}}_{im}}, & \text{if } \theta_{im}^n < \theta^{\text{max}}_{im}, \\ 
1, & \text{if } \theta_{im}^n \geq \theta^{\text{max}}_{im}.
\end{cases}
\end{equation}

The overall reliability of the selected UGV groups for completing \begin{small}$\mathcal{G}_m$ is calculated as $\Theta_m = \sum_{i=1}^{|\mathcal{S}_m|} \theta_{im}^n$\end{small} and its value of reliability is denoted as \begin{small}$U_{m}^{\text{reliability}}=\sum_{i=1}^{|\mathcal{S}_m|} \frac{\theta^{\text{max}}_{im}}{\Theta^{\text{max}}_m} u_{im}^{\text{reliability}}$\end{small}.
\vspace{-0.2cm}
\subsection{Modeling Risk of Task Completion}
Dynamic task requirements and collaborator conditions introduce the risk of violating latency, energy, and reliability goals. Following the Risk of Task Completion metric in~\cite{jiazhiICC}, inspired by Conditional Value at Risk (CVaR)~\cite{risk1}, we define a task-specific trust threshold concerning dynamic subtask completion goals: the task requestor $m$ announces each subtask trust threshold of $1-\alpha_{im}$, where $\alpha_{im} = \xi_1 t^{\text{max}}_{im}+\xi_2 e^{\text{max}}_{im}+\xi_3 (1-\theta^{\text{max}}_{im})\in (0,1)$. $0\leq \xi_1, \xi_2, \xi_3\leq 1$ are weighting coefficients capturing the relative emphasis on different completion goals, with $\xi_1+\xi_2+\xi_3=1$, and all completion goals are normalized to a common scale between 0 to 1. The risk of selecting collaborator $n$ for completing subtask $s_{im}$ under the trust threshold of $1-\alpha_{im}$ is evaluated as
\begin{equation}
r_{\alpha_{im}}^{n}=\sum_{n=1}^{|\mathcal{\tilde N}|} x_{im}^n\inf\limits_{\beta\in \mathbb{R}}\left\{\beta+\mathbb{E}[\mathcal{Z}-\beta]^{+}\right\},
\end{equation}
where $\beta$ denotes an auxiliary variable with the same range as $\mathcal{Z}$, $\mathbb{R}$ represents the set of all real numbers, and $\mathbb{E}$ is the expectation. Based on subtask completion goals $\boldsymbol{o}_{im}$ and actual completion performance $\widetilde{\boldsymbol{o}}_{im}^n=[t_{im}^n, e_{im}^n, \theta_{im}^n]$ of selecting collaborator $n$, when $t_{im}^n > t^{\text{max}}_{im}$, $e_{im}^n > e^{\text{max}}_{im}$, and $\theta^{\text{max}}_{im} > \theta_{im}^n$, $\mathcal{Z}=w_1(t_{im}^n-t^{\text{max}}_{im})+w_2(e_{im}^n-e^{\text{max}}_{im})+w_3(\theta^{\text{max}}_{im}-\theta_{im}^n)$, otherwise, $\mathcal{Z}=0$, where $w_1, w_2, w_3$ scale the deviation from task requirements without affecting the underlying risk formulation. Note that all vectors in $\widetilde{\boldsymbol{o}}_{im}^n$ and $\boldsymbol{o}_{im}$ are normalized to a common scale between 0 to 1.

The overall risk for completing $\mathcal{G}_m$, denoted as \( \mathcal{R}_m \), considers both the risk of assigning different collaborators to individual subtasks and the compounded risk introduced by interdependencies between subtasks, given by
\begin{equation}
 R_m=\sum_{i=1}^{|\mathcal{S}_m|}w_{im}r_{\alpha_{im}}^{n}+ \sum_{i=1}^{|\mathcal{S}_m|} \sum_{i' > i} e^{ii'}_m w^{ii'}_m \max(r_{\alpha_{im}}^n, r_{\alpha_{i'm}}^n),
\end{equation}
where $w_{im}$ indicates the relative importance of $s_{im}$ in $\mathcal{G}_m$, reflecting task structure while preserving risk modeling.

\subsection{Problem Formulation}
To jointly capture the value and risk of collaborator group selection, the utility of individual task $\mathcal{G}_m$ is needed, given by $Q_m=\frac{U_m}{\varrho+R_m}$, where $U_m$ and $R_m$ denote the obtained value and perceived risk for completing $\mathcal{G}_m$, respectively, and the constant $\varrho=1$ indicates that the impact of $U_m$ and $R_m$ is balanced equally, with $R_m$ reducing $Q_m$ proportionally to its magnitude in a scale-invariant manner.

However, task requirements and collaborator conditions vary over time, the same collaborator group may yield different utility for identical tasks, motivating long-term average utility for stable decision-making. Given the individual task utility $Q_m(t)=\frac{U_m(t)}{\varrho+R_m(t)}$ at time $t$, the long-term average task utility is obtained by
\begin{equation}
\overline{Q}_m=\lim_{T \to \infty} \frac{1}{T} \sum_{t=1}^{T} Q_m(t).
\end{equation}

To incorporate fairness in concurrent task orchestration, we adopt an \textit{$\alpha$-fair utility function} to adjust $\overline{Q}_m$ based on fairness preferences ($\alpha \geq 0$), with smaller $\alpha$ favoring efficiency and larger $\alpha$ enforcing fairness~\cite{fairnesscomputingoffloading}, given by 
\begin{equation} \label{utitlityfunction}
U(x) =
\begin{cases} 
\frac{x^{1-\alpha}}{1-\alpha} & \text{if} \ \alpha \neq 1, \\
\log(x) & \text{if} \ \alpha = 1.
\end{cases}
\end{equation}

Based on the above system models, we formulate the objective as maximizing the long-term average utility of concurrent tasks under fairness considerations through trusted collaborator selection $\mathcal{X}=[x_{im}^n(t)]_{1 \leq t \leq T, 1 \leq n \leq N, 1 \leq m \leq M, 1 \leq i \leq |\mathbf S_m|}$ in resource-constrained collaborative systems, given by
\begin{align} \label{eqproblem}
    &\;\;\;\;\;\;\;\;\;\;\;\;\;\;\;\;\;\;\;\;\;\;\;\;\;\;\bm{{\cal P}}:\max\limits_{\mathcal{X}} \sum_{m = 1}^{M}U(\overline{Q}_m) \\
    & \text{s.t.} \;\;\;\;\;\; \sum_{n=1}^{N} {x_{im}^n(t)} = 1, \quad \forall i, m, t \in \left|\mathcal{S}_m\right|, {\cal M},{\cal T}, \tag{16a} \\
    & \;\;\;\;\;\;\;\;\;\;\sum_{m = 1}^{M}\sum_{i=1}^{\left|\mathcal{S}_m\right|} {x_{im}^n(t)} \le \left|\mathcal{V}_n\right|, \quad \forall n,t \in {\cal N},{\cal T}, \tag{16b} \\
    &\begin{aligned}
        & \;\;\;\;\;\;\;\;\;\;\;e^{-\Delta t\times w_{nn^{'}}} \geq \mu, \quad \forall {x_{im}^n(t)} \times {x_{i^{'}m}^{n^{'}}(t)}=1, \\ 
        & \;\;\;\;\;\;\;\;\;\;\;\;\;\forall n, n^{'}\in \mathcal{\tilde N}, n \neq n^{'}, \forall e^{ii^{'}}_m, m, t \in \mathcal{E}_m, {\cal M},{\cal T}
    \end{aligned} \tag{16c} \\
    &\begin{aligned}
        & \;\;\;\;\;\;\;\;\;\;\;p_{m}(t)=0, \quad {x_{im}^n(t)}=0, \\ 
        & \;\;\;\;\;\;\;\;\;\;\;\;\;\forall i, m, t \in \mathcal{S}_m, {\cal M},{\cal T}, n \notin \mathcal{\tilde N}
    \end{aligned} \tag{16d}
\end{align}
where $x_{im}^n(t)=1$ denotes collaborator $n$ is selected to complete subtask $s_{im}$ generated at time $t$, otherwise $x_{im}^n(t)=0$. Constraint (16a) states that each subtask can be offloaded to only one UGV. Constraint (16b) limits the offloaded subtasks to exceed the available VMs of each UGV. Constraint (16c) ensures that when connected subtasks $s_{im}$ and $s_{i^{'}m}$ of task $\mathcal{G}_m$ are offloaded to different UGVs $n$ and $n^{'}$, the probability of data exchange time $\Delta \tau_{nn^{'}}$ exceeding $\Delta t$ must surpass a threshold $\mu$ while Constraint (16d) restricts UAVs to offload subtasks only to UGVs within their communication radius.

\textbf{Remark 1:} Solving the mixed integer programming problem in (\ref{eqproblem}) is extremely challenging due to 1) the difficulty of long-term average optimization, induced by inherent uncertainties from dynamic task requirements, varying collaborator conditions, and diverse attacks in collaborative systems; and 2) the need to maximize the total task utility of all concurrent tasks, while simultaneously ensuring long-term fairness in resource allocation and allowing short-term fairness variations due to system dynamics such as task urgency and limited resources. These challenges motivate the hierarchical strategy developed next to solve \(P\).

\section{Hierarchical Trusted Collaborator Group Formation and Task Allocation}\label{sectioniii}
To solve (\ref{eqproblem}), we propose a hierarchical strategy for trusted collaborator group formation and task allocation under long-term fairness. As shown in Fig. \ref{fig:pictures2}, the framework consists of two core modules: 1) \underline{H}ierarchical \underline{T}ask-\underline{S}pecific \underline{T}rust prediction via \underline{GNN}s (HTST-GNN) at coarse- and fine-grained levels, and 2) trust-embedded fairness-aware DRL for coarse-to-fine matching. Online inference is coordinated by nearby RSUs, while model training is performed offline and periodically.

\subsection{Hierarchical Task-Specific Trust Prediction via GNNs \label{subsectionivA}}
Dynamic task requirements and collaborator conditions directly affect the objective in (\ref{eqproblem}), while representing all such factors cannot efficiently capture their interdependencies. We therefore use task-specific trust to abstract the value and risk of task-resource interactions. The proposed HTST-GNN dynamically coarsens tasks and resources and performs spatial-temporal aggregation to predict trust at both coarse- and fine-grained levels.
\begin{figure*}[!ht]
\centering
\includegraphics[width=0.70\textwidth]{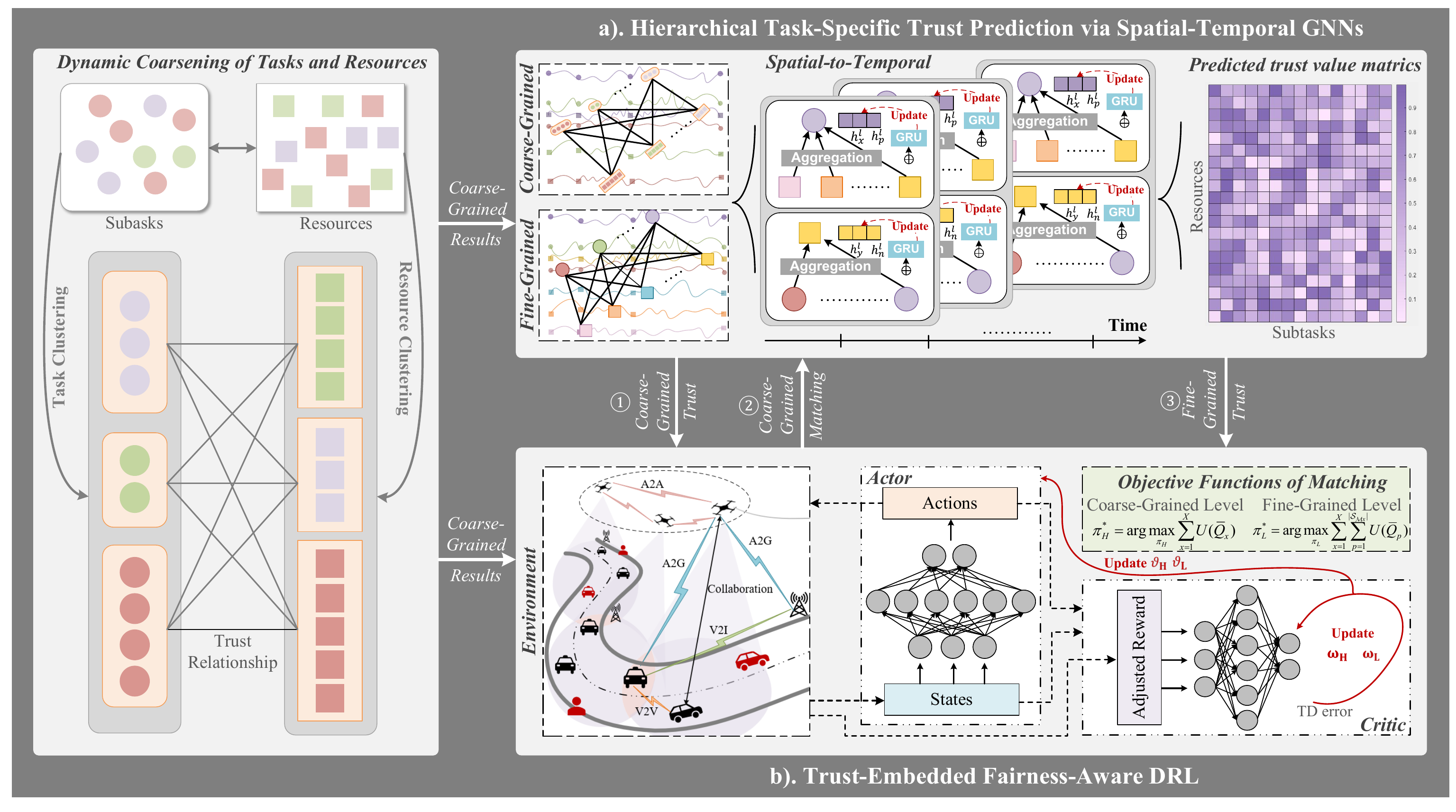}
\caption{Overview of the proposed hierarchical trusted collaborator group formation and task allocation strategy for concurrent task completion under long-term fairness. It consists of two modules: (i) Hierarchical Task-Specific Trust Prediction via spatial-temporal GNNs, and (ii) Trust-Embedded Fairness-Aware DRL.}
\label{fig:pictures2}
\end{figure*}

We model the collaborative system over time as a sequence of trust-guided bipartite graphs $G(t)=(\mathcal{S}_{\mathcal{M}}(t), \mathcal{N}(t), \mathcal{E}(t))$ during time $\{1,2,\ldots,t\}$, where $\mathcal{G}^{\hat{t}}=(\mathcal{S}_{\mathcal{M}}^{\hat{t}}, \mathcal{N}^{\hat{t}}, \mathcal{E}^{\hat{t}})$ belongs to $G(t)\rightarrow \{\mathcal{G}^1, \mathcal{G}^2,\ldots, \mathcal{G}^t\}, \forall \hat{t} \in\{1,2,\ldots, t\}$. Each $s_l^{\hat{t}} \in \mathcal{S}_{\mathcal{M}}^{\hat{t}}$, $n^{\hat{t}} \in \mathcal{N}^{\hat{t}}$, and $e_{s_l \rightarrow n}^{\hat{t}} \in \mathcal{E}^{\hat{t}}$ have time-varying features, i.e., subtask features $\mathcal{F}_l^{\hat{t}}=[\boldsymbol{o}_l(\hat{t}), D_l(\hat{t}), \rho_l(\hat{t}), \mathcal{G}_m(\hat{t})]$, $\forall l=(i,m), i\in \{1,2,\ldots,|\mathcal{S}_m|\}, m \in \mathcal{M}$, collaborator features $\mathcal{F}_n^{\hat{t}}= [\phi_n(\hat{t}), p_{n}^{\text{jam}}(\hat{t}), \boldsymbol{f}_n(\hat{t}), \mathcal{G}_\mathcal{N}(\hat{t})]$, and trust value $w_{s_l \rightarrow n}^{\hat{t}} \in \mathcal{W}^{\hat{t}}$. The objective is to predict trust values at \(t+1\) by training a trust evaluation model on \(G(t)\) in a coarse-to-fine manner, with coarse-grained prediction between task and resource sets followed by fine-grained prediction between subtasks and collaborators within matched sets.

\subsubsection{Dynamic Coarsening Process}
Dynamic coarsening partitions tasks into feature-similar sets to reduce task-collaborator mismatches and the subsequent matching space.

\begin{definition}[\textbf{Coarsening of Concurrent Tasks}]
\label{def2}
\textit{Given the graph representation of individual task $\mathcal{G}_m, \forall m \in \mathcal{M}$, the coarsening of concurrent tasks, referred to by $\mathcal{C} \left( \left\{ \mathcal{G}_m \right\}_{m \in \mathcal{M}} \right)$, is the clustering of all subtasks $\mathcal{S}_{\mathcal{M}}$ based on subtask feature similarities, given by}
\begin{small}
\[\mathcal{C} \left( \left\{ \mathcal{G}_m\right\}_{m \in \mathcal{M}} \right)\triangleq (\mathcal{C}_\mathcal{M}, \mathcal{E}_\mathcal{M}),\]
\end{small}\textit{where $\mathcal{C}_\mathcal{M}=\left\{C_x|x=\{1,2,\ldots, X\}\right\}$, $X$ denotes the number of clusters, $C_x$ represents the $x$-th cluster including one or more subtasks from different UAVs, i.e., $C_x = \left\{ s_{p} \mid p \in \{1, 2, \dots, |\mathcal{S}_{\mathcal{M}x}|\} \right\}$ with features of cluster centroids $\mathcal{F}_x=\frac{1}{|\mathcal{S}_{\mathcal{M}x}|}\sum_{s_p \in C_x} \mathcal{F}_p$, $s_{p}$ and $\mathcal{F}_p$ are the $p$-th subtask and its features in $C_x$, respectively, $\mathcal{S}_{\mathcal{M}x}=\cup_{m \in \mathcal{M}}\mathcal{S}_{mx}, \forall\mathcal{S}_{mx} \subseteq \mathcal{S}_m$ represents the subtasks assigned to $C_x$, and $\mathcal{E}_\mathcal{M}=\left\{\mathcal{E}_m\right\}_{m \in \mathcal{M}}$ denotes the set of subtask dependencies.}
\end{definition}

Similarly, heterogeneous collaborator identities, secrecy capabilities, and computing resources motivate coarsening the UGV set as follows.

\begin{definition}[\textbf{Coarsening of Distributed Resources}]
\label{def_coarsening_resources}
\textit{The coarsening of distributed resources, referred to by $\mathcal{C} \left( \mathcal{G}_\mathcal{N} \right)$, is a construction of different collaborator sets} via clustering based on collaborator feature similarities, given by
\begin{small}
\[\mathcal{C} \left( \mathcal{G}_\mathcal{N} \right) \triangleq (\mathcal{C}_\mathcal{N}, \mathcal{E}_\mathcal{N}),\]
\end{small}\textit{where $\mathcal{C}_{\mathcal{N}}=\{C_{y} \mid y =\{1,2,\dots, Y\}\}$, $Y$ denotes the number of clusters, and $C_{y}$ represents the $y$-th cluster including one or more collaborators, i.e., $C_{y}= \{ \hat{n} \mid \hat{n} \in \{1,2,\dots, |\mathcal{N}_{y}|\}\}$ with features of cluster centroids $\mathcal{F}_y=\frac{1}{|\mathcal{N}_{y}|}\sum_{\hat{n} \in C_y} \mathcal{F}_{\hat{n}}$, $\hat{n}$ and $\mathcal{F}_{\hat{n}}$ are the $\hat{n}$-th collaborator and its features in $C_y$, respectively, $\mathcal{N}_{y} \subseteq \mathcal{N} = \bigcup_{\hat{n} \in C_{y}} \mathcal{V}_{\hat{n}}$ denotes the set of UGVs belongs to the $y$-th cluster and $\mathcal{V}_{\hat{n}}$ indicates the set of VMs of $\hat{n}$.}
\end{definition}

Based on \textbf{\textit{Definitions 1 and 2}}, K-means clustering~\cite{chengoal2025} produces feature-similar task and resource sets, yielding the coarsened trust-guided bipartite graph $G_{\mathcal{C}}(t)=(\mathcal{C}_\mathcal{M}(t), \mathcal{C}_\mathcal{N}(t), \mathcal{E}_{\mathcal{C}_\mathcal{M}\rightarrow \mathcal{C}_\mathcal{N}}(t))$, where $\mathcal{G}_{\mathcal{C}}^{\hat{t}}=(\mathcal{C}_\mathcal{M}^{\hat{t}}, \mathcal{C}_\mathcal{N}^{\hat{t}}, \mathcal{E}_{\mathcal{C}_\mathcal{M}\rightarrow \mathcal{C}_\mathcal{N}}^{\hat{t}})$ is a coarsened graph at $\hat{t}$.

\subsubsection{Spatial-Temporal Feature Aggregation}
After obtaining the coarsened trust-guided bipartite graphs, spatial-temporal feature aggregation is applied to capture both static and dynamic information in graphs $G_{\mathcal{C}}(t)$. As nodes and edges in each $\mathcal{G}_{\mathcal{C}}^{\hat{t}}$ have different attribute vector dimensions, the initial embeddings of coarse-grained subtask sets, collaborator sets, and their edges are obtained by transforming their raw features of the cluster centroid into a common vector space as $ h_x^0= W_{x} \cdot \mathcal{F}_x$, $ h_{y}^0= W_{y} \cdot \mathcal{F}_y$, and $\varpi_{x \rightarrow y}^0 =W_{e} \cdot w_{x \rightarrow y}$, where $W_{x}$, $W_{y}$, and $W_{e}$ indicate the learnable parameter matrices, respectively, $w_{x \rightarrow y}=\frac{1}{|\mathcal{S}_{\mathcal{M}x}|\times|\mathcal{N}_{y}|}\sum_{s_p \in C_x}\sum_{\hat{n} \in C_y}w_{s_p \rightarrow \hat{n}}$ indicates the edge weight (trust value) between $C_x$ and $C_y$, and $w_{s_p \rightarrow \hat{n}}=1-(\xi_1 t^{\hat{n}}_{p}+\xi_2 e^{\hat{n}}_{p}+\xi_3 (1-\theta^{\hat{n}}_{p}))$ denotes the initial trust value between $s_p \in C_x$ and $\hat{n} \in C_y$, where $\xi_1,\xi_2,\xi_3$ balance different task attributes in the initial trust estimation.

To obtain spatial information for trust modeling in $\mathcal{G}_{\mathcal{C}}^{\hat{t}}$, we first learn latent representations of subtasks and collaborators by propagating and aggregating information from their neighbors \cite{trustgnn}. At each layer $l =\{1,2,\dots, L\}$, the process of the cluster centroid $s_x$ of a task set aggregates propagated data from collaborators is formulated as
\begin{equation}\label{subtaskagg}
    h_{x}^{l}=\text{UPD}\left(\text{AGG}(\{h_{y}^{l-1}||\varpi_{x \rightarrow y}^{l-1}, \forall y \in \mathcal{Y}\})\right),
\end{equation}
where $||$ is the concatenation operator, $h_{y}^{l-1}$ and $\varpi_{x \rightarrow y}^{l-1}$ denote the embeddings of the cluster centroid $n_y$ of a resource set and edge between $s_x$ and $n_y$ at layer $l-1$, respectively. Besides, \text{UPD}($\cdot$) and \text{AGG}($\cdot$) denote the update function and aggregation function, respectively. We use weighted-sum aggregation to incorporate edge information and GRUs as the update function.

Symmetrically, the resource-set centroid is updated as
\begin{equation}\label{collaboratoragg}
h_{y}^{l}=\text{UPD}\left(\text{AGG}(\{h_{x}^{l-1}||\varpi_{x \rightarrow y}^{l-1}, \forall x \in \mathcal{X}\})\right).
\end{equation}

Note that $h_{x}^{L}$ and $h_{y}^{L}$ represent the embeddings of $s_x$ and $n_y$ at the last layer $L$, respectively, and are expressed as $h_{x}^{\hat{t}}$ and $h_{y}^{\hat{t}}$ when static aggregation is performed at time $\hat{t}$.

To capture temporal variations, attention is applied to historical embeddings $\mathcal{H}_x^t=[h_x^1,h_x^2,...,h_x^t ]^\top$ and $\mathcal{H}_y^t=[h_y^1,h_y^2,...,h_y^t ]^\top$ yielding
\begin{equation}\label{subtaskfinal}
   h_{x}^{\text{final}}=\sum _{\hat{t}=1}^{t} a^{\hat{t},t}_{x} W_x^{V} h^{\hat{t}}_{x},
\end{equation}
\begin{equation}\label{collaboratorfinal}
   h_{y}^{\text{final}}=\sum _{\hat{t}=1}^{t} a^{\hat{t},t}_{y} W_y^{V} h^{\hat{t}}_{y},
\end{equation}
where $W_m^{V}$ and $W_j^{V}$ are learnable parameter matrices, $a^{\hat{t},t}_{x}$ and $a^{\hat{t},t}_{y}$ are normalized attention coefficients, which measure the relative importance of embeddings from each time slot $\hat{t}$ to the target time slot $t$ as follows
\begin{equation}\label{subtaskimpweight}
    a^{\hat{t},t}_{x} =\iota\left( \frac{\left(W_x^{Q} h^{t}_{x}\right) \left(W_x^{K} h^{\hat{t}}_{x}\right)^\top}{\sqrt{d}} \right),
\end{equation}
\begin{equation}\label{collaboratorimpweight}
    a^{\hat{t},t}_{y} =\iota\left( \frac{\left(W_y^{Q} h^{t}_{y}\right) \left(W_y^{K} h^{\hat{t}}_{y}\right)^\top}{\sqrt{d}} \right),
\end{equation}
where $\iota$ is the softmax function for normalization, $d$ is the dimension of each input embedding, $W_x^{Q}$, $W_x^{K}$, $W_y^{Q}$, and $W_y^{K}$ are learnable parameter matrices. 

\begin{algorithm}[!ht]
\footnotesize
\caption{HTST-GNN for State Representation}
\label{GNNalgorithm}
\KwIn{Trust-guided bipartite graphs $\mathcal{G}(t)$}
\KwOut{Predicted trust matrices 
$[\hat{\mathcal{W}}]_{X\times Y}^{t+1}$ and
$[\hat{\mathcal{W}}]_{|\mathcal{S_M}|\times \mathcal{N}}^{t+1}$}

\tcp{Dynamic Coarsening}
\For{$\mathcal{G}^{\hat t}\in\mathcal{G}(t)$}{
Coarsen tasks and resources to obtain $\mathcal{G}_{\mathcal C}^{\hat t}$\;
}
Store $\mathcal{G}_{\mathcal C}(t)$\;

\tcp{Coarse-Grained Trust Prediction}
Initialize $h_x^0,h_y^0,\varpi_{x\rightarrow y}^0$\;
\For{$\mathcal{G}_{\mathcal C}^{\hat t}\in\mathcal{G}_{\mathcal C}(t)$}{
Prune $\mathcal{G}_{\mathcal C}^{\hat t}$ via (\ref{similarity})\;
Obtain $h_x^{\hat t},h_y^{\hat t}$ via
(\ref{subtaskagg})--(\ref{collaboratoragg})\;
}
Predict $\hat w_{x\rightarrow y}^{t+1}$ via
(\ref{subtaskfinal})--(\ref{predicttrust})\;

\tcp{Fine-Grained Trust Prediction}
\For{$(C_x,\mathbf{M}(C_x))$}{
Apply the same spatial-temporal prediction to
$s_p\in C_x$ and $n\in\mathbf{M}(C_x)$\;
Obtain $\hat w_{p\rightarrow n}^{t+1}$\;
}
Store the coarse- and fine-grained trust matrices\;
\end{algorithm}

\subsubsection{Prediction and Training}
The final embeddings are mapped by an Multi-Layer Perception (MLP) to obtain the edge embedding, given by
\begin{equation}
\widehat{\varpi}^{t+1}_{x \rightarrow y} = \text{MLP} ( h_{x}^{\text{final}} || h_{y}^{\text{final}} ).
\end{equation}

The corresponding trust value is subsequently derived by transforming $\widehat{\varpi}^{t+1}_{x \rightarrow y}$ into scalar representation, formulated as
\begin{equation}\label{predicttrust}
\hat{w}^{t+1}_{x \rightarrow y} = \varkappa ( W_p \cdot \widehat{\varpi}^{t+1}_{x \rightarrow y} + b_p),
\end{equation}
where $\varkappa$ is a sigmoid function that maps the trust values to the range [0, 1], $W_p$ and $b_p$ are learnable parameter matrices.

To make the trust prediction process task-specific, training is restricted to graph knowledge from similar tasks, thus reducing excessive complexity influenced by the size and interdependent diversity of nodes and edges in trust-guided bipartite graphs. Thus, a sub bipartite graph of $\mathcal{G}_{\mathcal{C}}^{\hat{t}}$ relevant to the new subtask $s_l^{t+1}$ is obtained by pruning the coarse-grained task sets and edges into two subsets, denoted as \begin{small}$\widetilde{\mathcal{C}}_\mathcal{M}^{\hat{t}}=\{C_x^{\hat{t}}|\forall x \in \widetilde{X}, \Lambda_{lx}^{\hat{t},t+1}\geq \nu\}$\end{small} and \begin{small}$\widetilde{\mathcal{E}}_{\mathcal{C}_\mathcal{M}\rightarrow \mathcal{C}_\mathcal{N}}^{\hat{t}}=\{e_{x \rightarrow y}^{\hat{t}} |\forall x \in \widetilde{X}, \forall y \in Y, \Lambda_{lx}^{\hat{t},t+1}\geq \nu\}$\end{small}, where $\widetilde{X} \subseteq X$ indicates the pruned subset of coarse-grained task sets, $\nu$ represents a similarity threshold, and $\Lambda_{lx}^{\hat{t},t+1}$ is the similarity coefficient between new subtask $s_l^{t+1}$ and previous cluster centroid $s_x^{\hat{t}}$, which is calculated based on Cosine similarity, given by
\begin{equation} \label{similarity} 
\Lambda_{lx}^{\hat{t},t+1}=\frac{\left(h^{0}_{l}(t+1)\right)^\top \cdot h^{0}_{x}(\hat{t})}{\Vert h^{0}_{l}(t+1) \Vert _{2} \cdot \Vert h^{0}_{x}(\hat{t}) \Vert _{2}},
\end{equation}
where $\Vert \cdot \Vert _{2}$ indicates the $L_2$ norm of a vector.

The GNN parameters $\Psi$ are optimized using the cross-entropy loss, which is denoted as
\begin{equation}\label{loss}
\Psi^{*}=\arg\min_{\Psi}\mathcal{L}(\hat{w}^{t+1}_{x \rightarrow y},w^{t+1}_{x \rightarrow y}),
\end{equation}
where $w^{t+1}_{x \rightarrow y}$ indicates the ground truth, and the Adam optimizer is utilized. For highly similar tasks, learned parameters are reused directly, otherwise fine-tuning is performed.

\textbf{Remark 2:} After coarse-grained trust prediction, the matched resource sets $\mathbf{M}(C_x)$ form candidate collaborator pools for each task set $C_x$, The same spatial-temporal aggregation and task-specific training are then applied to subtasks $s_p \in C_x$ and collaborators $\hat{n} \in \mathbf{M}(C_x)$, yielding fine-grained trust predictions with value $\hat{w}^{t+1}_{p \rightarrow n}$. The procedure of HTST-GNN is summarized in \textbf{Algorithm \ref{GNNalgorithm}}.

\subsection{Trust-Embedded Fairness-Aware DRL}\label{subsectionivB}
We solve $\bm{{\cal P}}$ in (\ref{eqproblem}) through hierarchical DRL~\cite{DDPG, mao2024dl}, progressively reducing the task-collaborator matching space. A coarse-grained policy first matches task and resource sets, after which a fine-grained policy performs subtask-collaborator matching within the selected sets.

\textit{Stage I (Optimization of the Coarse-Grained Task Set-Resource Set Matching Policy):} The coarse-grained matching policy $\pi_H^*$ is learned after dynamic coarsening of tasks and resources, formulated as
\begin{align}\label{eqproblemsub1}
    &\;\;\;\;\;\;\;\;\;\;\;\;\;\;\;\;\;\bm{{\cal P}_1}: \pi_H^*=\arg\max\limits_{\pi_H} \sum_{x = 1}^{X}U(\overline{Q}_x)\\
    & \text{s.t., } \;\;\;\;\;\;\;\;\;\; \sum_{y=1}^{Y} {\mathbf{x}_{x}^y(t)} \geq 1,\forall C_x \in {\mathcal{C}_\mathcal{M}}, t \in {\cal T}, \tag{27a}\\
    & \;\;\;\;\;\;\;\;\;\;\;\;\;\;\;\; \sum_{x=1}^{X} {\mathbf{x}_{x}^y(t)} \geq 1,\forall C_y \in {\mathcal{C}_\mathcal{N}}, t \in {\cal T}, \tag{27b}\\
     & \;\;\;\;\;\;\;\;\left|\mathcal{S}_{\mathcal{M}x}\right| \leq \sum_{C_y \in \mathbf{M}(C_x)}\sum_{\hat{n}\in C_y}\left|\mathcal{V}_{\hat{n}}\right|,\forall C_x \in {\mathcal{C}_\mathcal{M}}, \tag{27c}
\end{align}
where $\overline{Q}_x=\lim_{T \to \infty} \frac{1}{T} \sum_{t=1}^{T} Q_x(t)$ denotes the long-term average utility of task set $C_x$, with $Q_x(t)=\frac{1}{|\mathcal{S}_{\mathcal{M}x}|}\sum_{s_p \in C_x} \frac{u_p(t)}{\varrho+r_{\alpha_{p}}(t)}$ aggregating subtask-level value $u_p(t)$ and risk $r_{\alpha_{p}}(t)$. Here, $\mathbf{x}_{x}^y(t)$ indicates the matching decision between $C_x$ and $C_y$. Constraints (27a)-(27c) permit many-to-many set matching while ensuring that the subtasks in each task set do not exceed the available VMs of its matched resource sets $\mathbf{M}(C_x)$.

\textit{Stage II (Optimization of the Fine-Grained Subtask-Collaborator Matching Policy Within Matched Task-Resource Sets):} Conditioned on the coarse-grained matching policy $\pi_H^*$, the fine-grained policy $\pi_L^*$ optimizes subtask-collaborator matching within selected sets, formulated as
\begin{align}\label{eqproblemsub2}
    &\;\;\;\bm{{\cal P}_2}: \pi_L^*=\arg\max\limits_{\pi_L} \sum_{x = 1}^{X}\sum_{p = 1}^{|\mathcal{S}_{\mathcal{M}x}|}U(\overline{Q}_p)\\
    & \;\;\;\;\;\;\;\;\;\;\;\;\;\;\;\;\;\;\;\;\;\text{s.t., } (\text{16a})-(\text{16d}),\notag
\end{align}
where $\overline{Q}_p=\lim_{T \to \infty} \frac{1}{T} \sum_{t=1}^{T} \frac{u_p(t)}{\varrho+r_{\alpha_{p}}(t)}$ denotes the long-term utility of subtask $s_p$.

\subsubsection{Stage I–Design and Learning of Coarse-Grained Level Matching Policy} \label{subsectionBstageI}
The collaborative system and task sets serve as environment and learning agents, respectively, optimizing $\pi_H^*$ to maximize the task set rewards in an infinite-horizon discounted MDP. The three key components of MDP in $\bm{{\cal P}_1}$ are elaborated as follows:

a). \textit{State}: The coarse-grained state is the predicted task-specific trust matrix, defined as $\mathbf{s}_t^H=[s_1^H(t), \ldots, s_X^H(t)]$, where $s_x^H(t)=[\hat{w}^{t}_{x \rightarrow 1}, \ldots, \hat{w}^{t}_{x \rightarrow Y}]$ denotes the predicted trust values of task set $C_x$ over all resource sets at time $t$.

b). \textit{Action}: The action $\mathbf{a}_t^H=[a_1^H(t),...,a_X^H(t)]$ represents task-set/resource-set matching decisions, where $a_x^H(t)=[\mathbf{x}_{x}^1(t),...,\mathbf{x}_{x}^Y(t)]$.

c). \textit{Rewards}: Because long-term utility is unavailable online, the instantaneous reward $r_x^t=Q_x(t)$ is weighted by the derivative of the $\alpha$-fair utility in (\ref{utitlityfunction}) using the running average $\bar{h}_x^t=\frac{1}{t}\sum_{\hat{t}=1}^t r_x^{\hat{t}}$. The adjusted reward is then given by
    \begin{equation}\label{adjustrewardgroup}
\widetilde{r}_x^t=r_x^t{U^\prime }(\bar{h}^t_x),
\end{equation}
where $U^\prime(\cdot)$ denotes the derivative of the fairness utility, assigning higher weights to tasks with lower long-term utility.

At time $t$, task sets observe the state $\textbf{s}_t^H$ and select an action $\textbf{a}_t^H$ following the policy $\pi(a|s)$, which is represented by Artificial Neural Networks (ANNs) ${\pi}_{\boldsymbol{\vartheta_H}}$, where $\boldsymbol{\vartheta_H}$ indicates the policy parameter associated with ANNs. The following theorem establishes the stationary-point property of the resulting policy.

\begin{theorem}[\textbf{Policy Parameter $\boldsymbol{\vartheta_H}^*$ is the Stationary Point of the $\alpha$-Fair Utility Optimization}]
By following the optimal policy ${\pi}_{\boldsymbol{\vartheta_H}^*}$, the gradient (or derivative) of the $\alpha$-fair utility optimization problem equals zero, i.e., $\nabla _{\boldsymbol{\vartheta_H}} [{\sum_{x = 1}^{X}U(\overline{Q}_x)}]|_{\boldsymbol{\vartheta_H} =\boldsymbol{\vartheta_H} ^{\ast}} = 0$.
\end{theorem}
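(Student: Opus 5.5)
We will prove the theorem by a chain-rule argument that ties the gradient of the $\alpha$-fair objective to the gradient of the long-term average of the adjusted reward in (\ref{adjustrewardgroup}).

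\textbf{Step 1: differentiate the objective.} Write $J(\boldsymbol{\vartheta_H})=\sum_{x=1}^{X}U(\overline{Q}_x(\boldsymbol{\vartheta_H}))$, where $\overline{Q}_x(\boldsymbol{\vartheta_H})$ is the long-term average utility of $C_x$ under the stationary distribution induced by ${\pi}_{\boldsymbol{\vartheta_H}}$. We assume the induced Markov chain is ergodic, so this limit exists and is differentiable in $\boldsymbol{\vartheta_H}$. The chain rule then gives
\begin{equation*}
\nabla_{\boldsymbol{\vartheta_H}}J=\sum_{x=1}^{X}U'(\overline{Q}_x)\,\nabla_{\boldsymbol{\vartheta_H}}\overline{Q}_x .
\end{equation*}
Since $U$ is concave and increasing for every $\alpha\ge 0$, each weight $U'(\overline{Q}_x)$ is positive and finite whenever $\overline{Q}_x>0$. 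This holds here because $U_m>0$ for any feasible group.

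\textbf{Step 2: write each $\nabla\overline{Q}_x$ as a policy gradient.} By the average-reward policy gradient theorem,
\begin{equation*}
\nabla_{\boldsymbol{\vartheta_H}}\overline{Q}_x=\mathbb{E}_{\mathbf{s}\sim d^{\pi},\,\mathbf{a}\sim\pi_{\boldsymbol{\vartheta_H}}}\big[\nabla_{\boldsymbol{\vartheta_H}}\log\pi_{\boldsymbol{\vartheta_H}}(\mathbf{a}|\mathbf{s})\,A_x^{\pi}(\mathbf{s},\mathbf{a})\big],
\end{equation*}
where $A_x^{\pi}$ is the advantage function for reward $r_x^t=Q_x(t)$. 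Substituting into Step 1, the full gradient $\nabla J$ equals the policy gradient for the scalar reward $\sum_x U'(\overline{Q}_x)\,r_x^t$.

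\textbf{Step 3: link to the adjusted reward.} By ergodicity, the running average $\bar h_x^t=\frac{1}{t}\sum_{\hat t\le t}r_x^{\hat t}$ converges almost surely to $\overline{Q}_x$. Because $U'$ is continuous on $(0,\infty)$, the weight $U'(\bar h_x^t)$ converges to $U'(\overline{Q}_x)$. Hence the asymptotic gradient of the long-run average of $\sum_x\widetilde{r}_x^t=\sum_x r_x^tU'(\bar h_x^t)$ with respect to $\boldsymbol{\vartheta_H}$ coincides with $\nabla J$. This is the usual frozen-weight, two-timescale argument: the weights change on an $O(1/t)$ timescale, slower than the policy update, so the policy treats them as constants.

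\textbf{Step 4: conclude.} The optimal policy ${\pi}_{\boldsymbol{\vartheta_H}^*}$ is defined as a maximizer, or a limit point of stochastic gradient ascent, of the long-term average adjusted reward. At such a point the policy gradient in Step 3 vanishes, so $\nabla J(\boldsymbol{\vartheta_H}^*)=0$. If the parameterization is interior (unconstrained ANN parameters), first-order optimality gives this directly. If instead $\boldsymbol{\vartheta_H}^*$ is only reached as a limit of the learning iteration, we invoke standard stochastic-approximation (ODE-method) results: the iterates converge to the zero set of the mean-field gradient.

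\textbf{Main obstacle.} The hardest part is Step 3. The adjusted reward uses the time-varying weights $U'(\bar h_x^t)$, not the fixed $U'(\overline{Q}_x)$, and these weights depend on $\boldsymbol{\vartheta_H}$ through the trajectory. We must show that differentiating through the weights contributes nothing in the limit. Our first approach is to argue that the weights are asymptotically constant: since $\bar h_x^t-\overline{Q}_x\to0$ and $\nabla\bar h_x^t$ remains bounded, the extra term $r_x^tU''(\bar h_x^t)\nabla\bar h_x^t$ is handled by the two-timescale separation in which the weights are frozen. A fallback is to note directly that $\sum_x U'(\overline{Q}_x)\nabla\overline{Q}_x$ is exactly the gradient of $J$, so a fixed point of the frozen-weight scheme is automatically stationary for $J$.
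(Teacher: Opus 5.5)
Your skeleton is the paper's route as well: the chain rule on $\sum_x U(\overline{Q}_x)$, the average-reward policy gradient theorem applied to each $\overline{Q}_x$, identifying the $U'$-weighted reward with the adjusted reward (\ref{adjustrewardgroup}), and concluding from the vanishing of the actor update (\ref{actorupdategroup}) at $\boldsymbol{\vartheta_H}^*$. This is the derivation that produces (\ref{gradientgroup}). However, the way you close Steps 3--4 contains a claim that fails. You propose to show that differentiating the long-run average of $\sum_x r_x^tU'(\bar h_x^t)$ \emph{through} the weights contributes nothing in the limit. It does contribute. Under a fixed ergodic policy, $\frac{1}{T}\sum_{t\le T} r_x^tU'(\bar h_x^t)\to\overline{Q}_xU'(\overline{Q}_x)=\overline{Q}_x^{1-\alpha}$, hence
\begin{equation*}
\nabla_{\boldsymbol{\vartheta_H}}\sum_{x=1}^{X}\overline{Q}_xU'(\overline{Q}_x)=\sum_{x=1}^{X}\big(U'(\overline{Q}_x)+\overline{Q}_xU''(\overline{Q}_x)\big)\nabla_{\boldsymbol{\vartheta_H}}\overline{Q}_x=(1-\alpha)\nabla_{\boldsymbol{\vartheta_H}}J .
\end{equation*}
The extra term $-\alpha\overline{Q}_x^{-\alpha}\nabla\overline{Q}_x$ survives because $\nabla\bar h_x^t\to\nabla\overline{Q}_x\neq 0$; boundedness of $\nabla\bar h_x^t$ does not remove it. Consequently, defining $\boldsymbol{\vartheta_H}^*$ in Step 4 as a maximizer of the long-term average adjusted reward does not give the theorem. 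For proportional fairness ($\alpha=1$, one of the paper's two test cases) that average is identically $X$, so every policy is a maximizer. For $\alpha>1$ its maximizers are minimizers of $J$. Either way, this cannot be the $\boldsymbol{\vartheta_H}^*$ the theorem refers to.

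The repair is your fallback, which should be the proof rather than a backup. The estimator (\ref{gradientgroup}) is a score-function estimator whose advantage comes from a critic that treats $\widetilde r_x^t$ as an exogenous reward, so it never differentiates the weights. With weights held at $c_x$, your Step 2 gives its mean as $\sum_x c_x\nabla\overline{Q}_x(\boldsymbol{\vartheta_H})$. Characterize $\boldsymbol{\vartheta_H}^*$ as a limit point of the actor iteration at which this mean update vanishes. If the policy converges to $\boldsymbol{\vartheta_H}^*$, the Ces\`aro average gives $\bar h_x^t\to\overline{Q}_x(\boldsymbol{\vartheta_H}^*)$ under your ergodicity assumption, so $c_x=U'(\overline{Q}_x(\boldsymbol{\vartheta_H}^*))$ self-consistently. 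Step 1 then yields $\nabla J(\boldsymbol{\vartheta_H}^*)=\sum_x U'(\overline{Q}_x(\boldsymbol{\vartheta_H}^*))\nabla\overline{Q}_x(\boldsymbol{\vartheta_H}^*)=0$ directly. No timescale ordering of the weights is needed for this characterization. It would matter only for proving that the iteration actually reaches such a self-consistent point, which the theorem does not assert.
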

\begin{IEEEproof}
   Detailed derivations are given in \textbf{Appendix A}.
\end{IEEEproof}

Based on \textbf{\textit{Theorem 1}}, we propose a trust-embedded fairness-aware DRL algorithm with actor-critic structure to solve the $\alpha$-fair utility optimization in (\ref{eqproblemsub1}), in which the actor neural network selects actions according to the policy, while the critic neural network evaluates actions and guides policy updates.

a). \textbf{Actor Neural Network}: Employing the policy gradient method, the policy gradient of the objective $\sum_{x = 1}^{X}U(\overline{Q}_x)$ for $\bm{{\cal P}_1}$ is given by
\begin{align}
    & \nabla _{\boldsymbol{\vartheta_H}} \left[{\sum_{x = 1}^{X}U(\overline{Q}_x)}\right] = \notag\\
    &\;\;\;\;\;\;\;\;\;\;\;\;\;\mathbb{E}{_{{\pi}_{\boldsymbol{\vartheta_H}}} }\left[ \hat{A}_{\pi_{\boldsymbol{\vartheta_H}}}(s, a) \nabla_{\boldsymbol{\vartheta_H}} \ln \pi_{\boldsymbol{\vartheta_H}} (a|s)\right],
    \label{gradientgroup}
\end{align}
where $\hat{A}_{\pi_{\boldsymbol{\vartheta_H}}}(s, a)=\hat{\mathbb{Q}}_{\pi_{\boldsymbol{\vartheta_H}}}(s, a)-\hat{\mathbb{V}}_{\pi_{\boldsymbol{\vartheta_H}}}(s)$ denotes the advantage function that quantifies an action deviation from the expected state value. Here, \begin{small}
\begin{equation}\hat{\mathbb{Q}}_{\pi_{\boldsymbol{\vartheta_H}}}(s, a)= \mathop {\lim }\limits_{T \to \infty } \mathbb{E}{_{{\pi}_{\boldsymbol{\vartheta_H}}} }\left[ {\sum\limits_{t = 1}^T {\sum\limits_{x = 1}^X {\widetilde{r}_x^t} } - {{\left. {\hat{\varrho}_{\pi_{\boldsymbol{\vartheta_H}}}} \right|}_{{s_0} = s,{a_0} = a}}} \right]\notag\end{equation}
\end{small}and \begin{small}
\begin{equation} \hat{\mathbb{V}}_{\pi_{\boldsymbol{\vartheta_H}}}(s)= \mathop {\lim }\limits_{T \to \infty } \mathbb{E}{_{{\pi}_{\boldsymbol{\vartheta_H}}} }\left[ {\sum\limits_{t = 1}^T {\sum\limits_{x = 1}^X {\widetilde{r}_x^t} } - {{\left. {\hat{\varrho}_{\pi_{\boldsymbol{\vartheta_H}}}} \right|}_{{s_0} = s}}} \right] \notag\end{equation}
\end{small}represent the relative state-value and action-value functions, both centered by the average reward $\hat{\varrho}_{\pi_{\boldsymbol{\vartheta_H}}}=\lim_{T \to \infty} \frac{1}{T} \sum_{t=1}^{T}\sum_{x=1}^{X}\widetilde{r}_x^t$. Then, the policy gradient update for $\boldsymbol{\vartheta_H}$ is given by
    \begin{equation}
        \boldsymbol{\vartheta_H}  \leftarrow \boldsymbol{\vartheta_H} + \eta_{\boldsymbol{\vartheta_H}}  \cdot \nabla _{\boldsymbol{\vartheta_H}} \left[{\sum_{x = 1}^{X}U(\overline{Q}_x)}\right],
        \label{actorupdategroup}
    \end{equation}
    where $\eta_{\boldsymbol{\vartheta_H}} > 0$ is the learning rate for the policy update.
    
b). \textbf{Critic Neural Network}: The critic estimates the relative state-value function $\hat{\mathbb{V}}_{\pi_{\boldsymbol{\vartheta_H}}}(s)$ to evaluate the policy ${\pi}_{\boldsymbol{\vartheta_H}}$ and to guide actor updates. The critic approximates $\hat{\mathbb{V}}_{\pi_{\boldsymbol{\vartheta_H}}}(s)$ with $\hat{\mathbb{V}}_{\boldsymbol{\omega_H}}(s)$ and are optimized by minimizing the Mean Squared Error (MSE) loss, formulated as
\begin{equation}
    L\left( \boldsymbol{\omega_H} \right) = \mathbb{E}\left\{ {{{\left[ \sum_{x=1}^X \widetilde{r}_x^t + \gamma_0\hat{\mathbb{V}}_{\boldsymbol{\omega_H}}(\textbf{s}_{t+1}) - \hat{\mathbb{V}}_{\boldsymbol{\omega_H}}(\textbf{s}_t) \right]}^2}} \right\},
    \label{groupcriticloss}
\end{equation}
where $\gamma_0$ is the discount factor. The parameters $\boldsymbol{\omega_H}$ are iteratively updated as \begin{equation} \label{criticgroupupdate}
\boldsymbol{\omega_H} \leftarrow {\boldsymbol{\omega_H}}{\rm{ + }}\eta_{\boldsymbol{\omega_H}}  \cdot {{\nabla}_{\boldsymbol{\omega_H}}}L\left( \boldsymbol{\omega_H} \right),
\end{equation}where $\eta_{\boldsymbol{\omega_H}} > 0$ is the learning rate for $\boldsymbol{\omega_H}$.
\begin{algorithm}[!ht]
\fontsize{7.3pt}{6pt}\selectfont
\caption{Trust-Embedded Fairness-Aware DRL}
\label{DRLalgorithm}
\KwIn{Initial states $\mathbf{s}_t^H,\mathbf{s}_t^L$, learning rates, $\gamma_0$}
\KwOut{Coarse-grained policy ${\pi}_{\boldsymbol{\vartheta_H}^*}$ and fine-grained policy ${\pi}_{\boldsymbol{\vartheta_L}^*}$}

Initialize actor and critic parameters $\boldsymbol{\vartheta_H}, \boldsymbol{\omega_H}, \boldsymbol{\vartheta_L}, \boldsymbol{\omega_L}$

\For{$I=1:T/\Delta T$}{
    \For{$t=1:\Delta T$}{
        Sample $\mathbf{a}_t^H$; obtain $\mathbf{s}_{t+1}^H$;
        compute $\widetilde r_x^t$ via Eq.~(\ref{adjustrewardgroup});
        update $\boldsymbol{\vartheta_H},\boldsymbol{\omega_H}$
        via Eqs.~(\ref{actorupdategroup})--(\ref{criticgroupupdate})\;
    }

    Extract $(C_x,\mathbf{M}(C_x))$, $\forall x\in X$\;

    \For{$t=1:\Delta T$}{
        Sample $\mathbf{a}_t^L$; obtain $\mathbf{s}_{t+1}^L$;
        compute $\widetilde r_{p,x}^t$ via
        Eq.~(\ref{adjustrewardindividual});
        update $\boldsymbol{\vartheta_L},\boldsymbol{\omega_L}$
        following the Stage-I actor--critic updates\;
    }
}
\end{algorithm}

The optimal matching policy between task sets and resource sets ${\pi}_{\boldsymbol{\vartheta_H}^*}$ is obtained upon the objective function $\bm{{\cal P}_1}$ convergence after multiple iterations $I$, where each iteration runs for $ \Delta T$ time steps and the learning procedure is summarized in lines 3-5 of \textbf{Algorithm \ref{DRLalgorithm}}.

\subsubsection{Stage II–Design and Learning of Fine-Grained Level Matching Policy Within Matched Task-Resource Sets}
Under the converged coarse-grained level matching policy ${\pi}_{\boldsymbol{\vartheta_H}^*}$, which determines the candidate resource sets for each task set, the optimal matching results $\textbf{a}_t^{H^*}$ between task sets and resource sets are obtained. Based on these matched sets, the fine-grained policy ${\pi}_{\boldsymbol{\vartheta_L}^*}$ further optimizes the matching between individual subtasks and collaborators to solve $\bm{{\cal P}_2}$ under long-term fairness.

a). \textit{State}: Based on the trust prediction results between subtasks and collaborators within matched sets, the state is defined as\begin{small}
\begin{equation}\mathbf{s}_t^L = \big[ 
\overbrace{( s^L_{1, x}(t), \ldots, s^L_{|\mathcal{S}_{\mathcal{M}x}|, x}(t) )}^{C_x}, \forall C_x \in {\mathcal{C}_\mathcal{M}} \big],\notag
\end{equation}
\end{small}where the subvector is given by $s^L_{p, x}(t)=[\hat{w}^{t}_{p \rightarrow 1},..., \hat{w}^{t}_{p \rightarrow |\mathbf{M}(C_x)|}]$ and $|\mathbf{M}(C_x)|$ denotes the number of collaborators in $\mathbf{M}(C_x)$. 

b). \textit{Action}:  
Each subtask in $C_x$ is assigned to one collaborator from $\mathbf{M}(C_x)$, where the action space is restricted to matched sets, reducing the search space from $\mathcal{O}(N^{|\mathbf{S}_{\mathcal{M}}|})$ to $\mathcal{O}(N^{|\mathbf{M}(C_x)|})$, $|\mathbf{M}(C_x)|\ll |\mathbf{S}_{\mathcal{M}}|$. The fine-grained level action set is given by\begin{small}
\begin{equation}\mathbf{a}_t^L = \big[ 
\overbrace{( a^L_{1, x}(t), \ldots, a^L_{|\mathcal{S}_{\mathcal{M}x}|, x}(t) )}^{C_x}, \forall C_x \in {\mathcal{C}_\mathcal{M}} \big],\notag
\end{equation}
\end{small}where the matching action for $s_p \in C_x$ follows $a^L_{p, x}(t)=[x^1_{p,x}(t),..., x^{|\mathbf{M}(C_x)|}_{p,x}(t)]$.

c). \textit{Rewards}: Similar to Stage I, the original reward $r_{p,x}^t=\frac{u_p(t)}{\varrho+r_{\alpha_{p}}(t)}$ is adjusted under long-term fairness, yielding
\begin{equation}\label{adjustrewardindividual}
\widetilde{r}_{p,x}^t=r_{p,x}^t{U^\prime }(\bar{h}^t_{p,x}),
\end{equation}
where $\bar{h}^t_{p,x}=\frac{1}{t}\sum_{\hat{t}=1}^t r_{p,x}^{\hat{t}}$, $\forall \hat{t} \in\{1,2, ..., t\}$ represents the average historical reward for subtask $s_p$.

Based on the above MDP, the fine-grained policy ${\pi}_{\boldsymbol{\vartheta_L}}$ follows the same actor-critic learning procedure as Stage I, with $(\mathbf{s}_t^H, \mathbf{a}_t^H, \widetilde{r}_x^t, \boldsymbol{\vartheta_H}, \boldsymbol{\omega_H})$ replaced by fine-grained counterparts $(\mathbf{s}_t^L, \mathbf{a}_t^L, \widetilde{r}_{p,x}^t, \boldsymbol{\vartheta_L}, \boldsymbol{\omega_L})$. Actor and critic updates therefore follow (\ref{gradientgroup})–(\ref{criticgroupupdate}) within the matched task-resource sets. The overall learning procedure of the fine-grained level matching policy ${\pi}_{\boldsymbol{\vartheta_L}}$ is stated in lines 6-7 of \textbf{Algorithm \ref{DRLalgorithm}}. 

\subsection{Computational Complexity Analysis}\label{subsectionivC}

The dynamic coarsening process incurs
$\mathcal{O}(|\mathcal{S}_{\mathcal{M}}|X|\mathcal{F}|I
+|\mathcal{N}|Y|\mathcal{F}|I)$,
where $|\mathcal{F}|$ and $I$ denote the feature dimension and
the number of K-means iterations, respectively. Since
$X\ll|\mathcal{S}_{\mathcal{M}}|$ and $Y\ll|\mathcal{N}|$,
this introduces only a lower-order cost.

The coarse-grained complexity is
\begin{small}
\begin{equation*}
\mathcal{O}\left(
\begin{aligned}
&\underbrace{
L|{\cal E}_{{\cal C}_{\cal M}\to{\cal C}_{\cal N}}|d
+t(X+Y)d^2
}_{\text{Trust Prediction}}\\
&+\underbrace{
T_{\rm coarse}
\left(
\sum_{i=0}^{L_{\rm pol}}
n_{\rm pol}^{(i)}n_{\rm pol}^{(i+1)}
+
\sum_{j=0}^{L_{\rm val}}
n_{\rm val}^{(j)}n_{\rm val}^{(j+1)}
\right)
}_{\text{Matching}}
\end{aligned}
\right),
\end{equation*}
\end{small}
where the two terms correspond to spatial-temporal trust
prediction~\cite{trustgnn,Trustguard} and actor-critic
matching~\cite{DDPG}, respectively.

At the fine-grained level, computation is restricted to the
matched task-resource sets, yielding
\begin{small}
\begin{equation*}
\mathcal{O}\left(
\begin{aligned}
&\underbrace{
L\sum\nolimits_x^X
|{\cal E}_{p\in C_x\to n\in C_y}|d
+t(|{\cal S}_{\cal M}|+|{\cal N}|)d^2
}_{\text{Trust Prediction}}\\
&+\underbrace{
T_{\rm fine}
\left(
\sum_{i=0}^{L_{\rm pol}}
n_{\rm pol}^{(i)}n_{\rm pol}^{(i+1)}
+
\sum_{j=0}^{L_{\rm val}}
n_{\rm val}^{(j)}n_{\rm val}^{(j+1)}
\right)
}_{\text{Matching}}
\end{aligned}
\right),
\end{equation*}
\end{small}where $\sum_x^X|{\cal E}_{p\in C_x\to n\in C_y}|$ denotes the
number of edges within the matched sets and
$T_{\rm coarse}+T_{\rm fine}=T$.
Hence, the hierarchical design limits coarse-grained computation
to the coarsened graph and fine-grained optimization to matched
task-resource sets, reducing the overall matching and graph-processing
burden.

\section{Performance Evaluation\label{sectioniv}}
This section first presents the simulation setup and several comparison strategies, followed by evaluations addressing three key questions.

Q1: How quickly does the proposed strategy converge compared with existing task offloading strategies?

Q2: How robust is the proposed strategy across different problem sizes, fairness objectives, and key parameter settings?

Q3: How resilient is the proposed strategy to different attack scenarios compared with existing strategies?

\subsection{Simulation Setup}
We simulate a 3D space of 1000×1000×100 m, consisting of multiple UAVs and UGVs that are randomly distributed. 

\subsubsection{Parameter Settings of Tasks and Resources} We consider three task graph types (Bull, Star, and Tadpole) of UAVs \cite{tasktype} as shown in Fig. \ref{fig:tasktype}, where interdependent subtasks require contact durations $w^{ii'}_m \in [0.1, 0.3]$ \cite{liwanggraph}. Apart from graph topologies, each subtask possesses individual completion goals, i.e., $t^{\text{max}}_{im} \in [0.2, 0.9]$ s, $e^{\text{max}}_{im} \in [7, 14]$ kJ, $\theta^{\text{max}}_{im} = \{0.2, 0.5, 0.8\}$. New tasks are generated every 20 time slots, each lasting 0.05\,s. Besides, the edge weight of the resource graph follows $w_{nn^{'}} \in [0.05, 0.06]$ in small problem sizes and $w_{nn^{'}} \in [0.01, 0.02]$ in large problem sizes \cite{liwanggraph}.
\begin{figure}[!ht]
\centering
\includegraphics[width=0.40\textwidth]{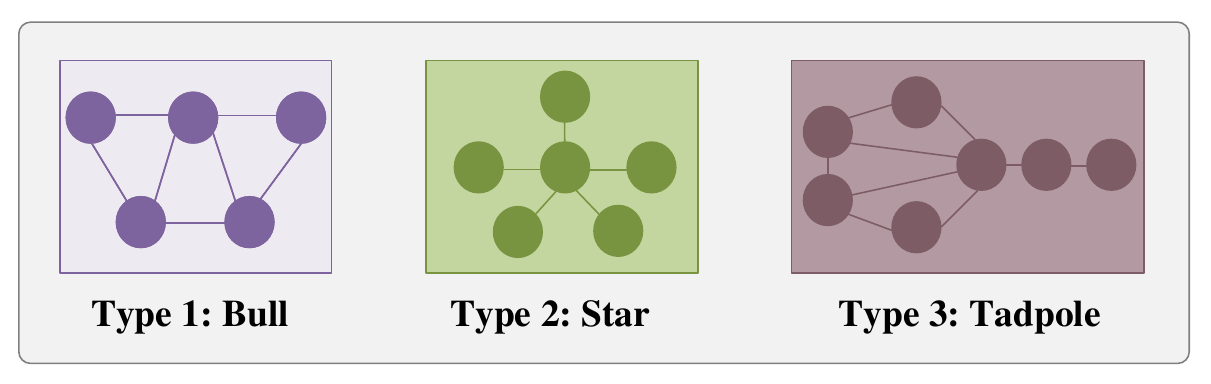}
\caption{Illustration of task graph types used in the simulations.}
\label{fig:tasktype}
\end{figure}

\subsubsection{Parameter Settings of Attacks} The simulations include ground Eves and malicious collaborators, accounting for 10\% and 20\% of the UGVs, respectively, following~\cite{chengoal2025}. On-Off attackers alternate between legitimate and malicious behaviors in 20-time-slot cycles~\cite{chengoal2025}. All proportions are rounded up when necessary.

The HTST-GNN uses two spatial aggregation layers with a learning rate of 0.001, structural dropout of 0, and temporal dropout of 0.5. The actor and critic each use 250 hidden units with learning rates of 0.001 and 0.01, respectively; $\gamma_0=0.9$, and Adam is used for optimization. These hyperparameters are empirically tuned with initial values guided by~\cite{Trustguard}. Other simulation parameters follow~\cite{communicationchannel,botaohyper} and are listed in Table~\ref{tab2}. All weighting parameters are set equally, i.e.,
$\varsigma_{1-6}=\xi_{1-3}=\frac{1}{3}$ and
$\epsilon_{1,2}=\zeta_{1,2}=0.5$.
\begin{table}[!ht]
    \renewcommand{\arraystretch}{1}
    \caption{Major Simulation Parameters}
    \label{tab2}
    \centering
    \begin{tabular}{>{\centering\arraybackslash}m{4.2cm}>{\centering\arraybackslash}m{3.8cm}}
        \hline\hline
        \textbf{Parameters} & \textbf{Value} \\
        \hline
        Subtask data size $D_{im}$ & [200, 1000] Kb \\
        Subtask processing density $\rho_{im}$ & \{500,600,700,800\} cycles/bit\\
        VM CPU clock frequency $f_{jn}$ & \{0.5, 0.8, 1\} GHz\\
        Jamming signal transmit power $p_{n}^{\text{jam}}$ & [50, 200] mW\\
        A2G channel bandwidth $W$ & [6, 8] MHz \\
        Transmission power $p_{m}$ & \{100,150,200\} mW \\
       Reference channel gain $g_1$ & -30 dB \\
       path loss exponent $\eta$ & 3 \\
       Maximum estimation error $\Delta g_{1,2}$ & $10^{-8}$ \\
       Noise power $\sigma ^2$ & -100 dBm \\
       Data exchange time threshold $\mu$ & [0.9, 1)\\
        \hline\hline
    \end{tabular}
\end{table}

\subsection{Comparison Strategies}
We compare the proposed strategy with representative baselines for dynamic collaborator selection. Each baseline removes or simplifies one or more core components to isolate their contributions.

\textit{1) \textbf{GDTrust} \cite{chengoal2025}}: A strategy that selects collaborators according to weighted trust scores updated upon task completion.

\textit{2) \textbf{TrustGNN} \cite{trustgnn}}: A GNN-based collaborator selection strategy using the same spatial architecture as ours but without temporal feature aggregation.

\textit{3) \textbf{SARSA} \cite{Sarsa}}: An RL strategy that maximizes long-term reward without considering trust and fairness.

\textit{4) \textbf{FairPG} \cite{fairmontecarlo}}: A fairness-aware policy-gradient strategy using Monte Carlo sampling.

\subsection{Convergence Analysis (Q1)}
\begin{figure*}[!ht]
\centering  
\subfigure[]{\includegraphics[width=0.31\textwidth]{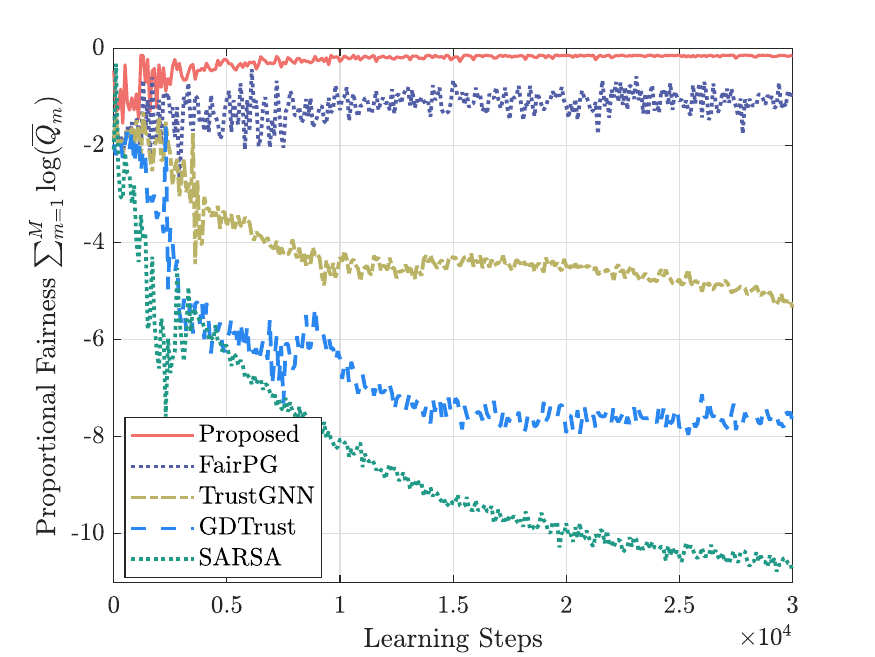}\label{fig:convergence1}}
\subfigure[]{\includegraphics[width=0.31\textwidth]{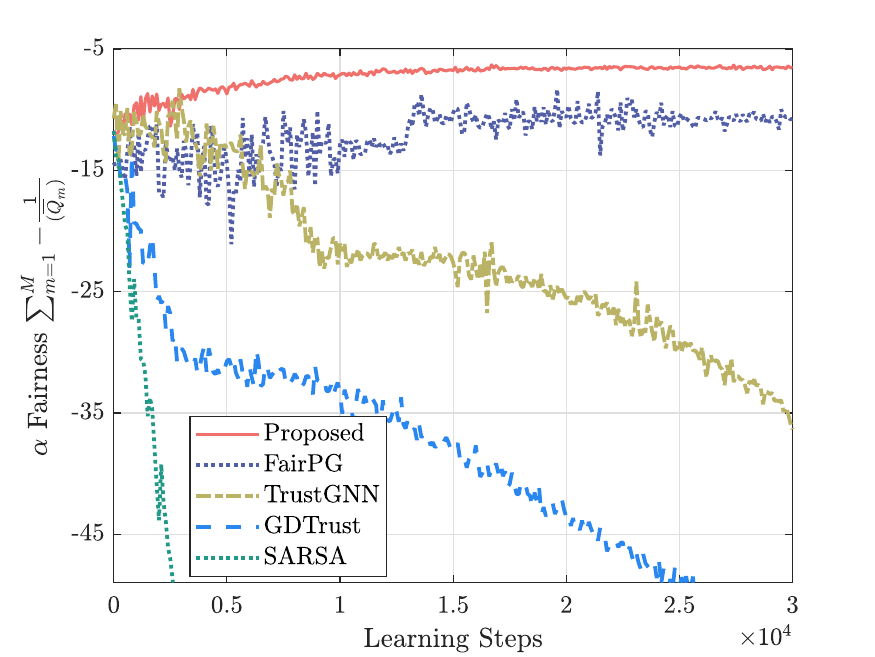}\label{fig:convergence2}}
\subfigure[]{\includegraphics[width=0.31\textwidth]{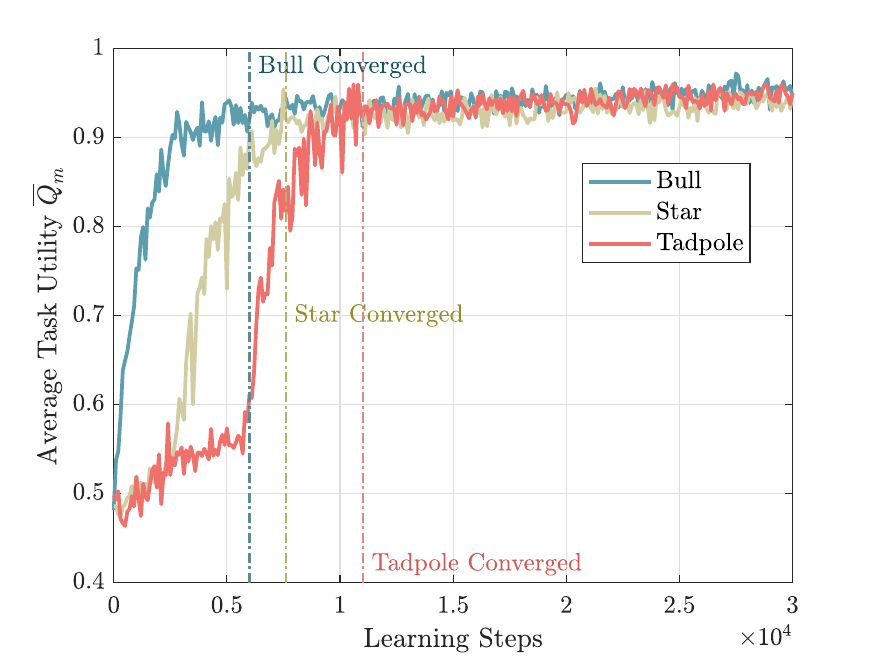}\label{fig:convergence3}}
\caption{Convergence comparison: (a) proportional fairness utility; (b) $\alpha$-fairness utility; (c) average task utility across task graph types.}
\label{fig:convergence}
\end{figure*}

\begin{figure*}[!ht]
\centering  
\subfigure[]{\includegraphics[width=0.39\textwidth]{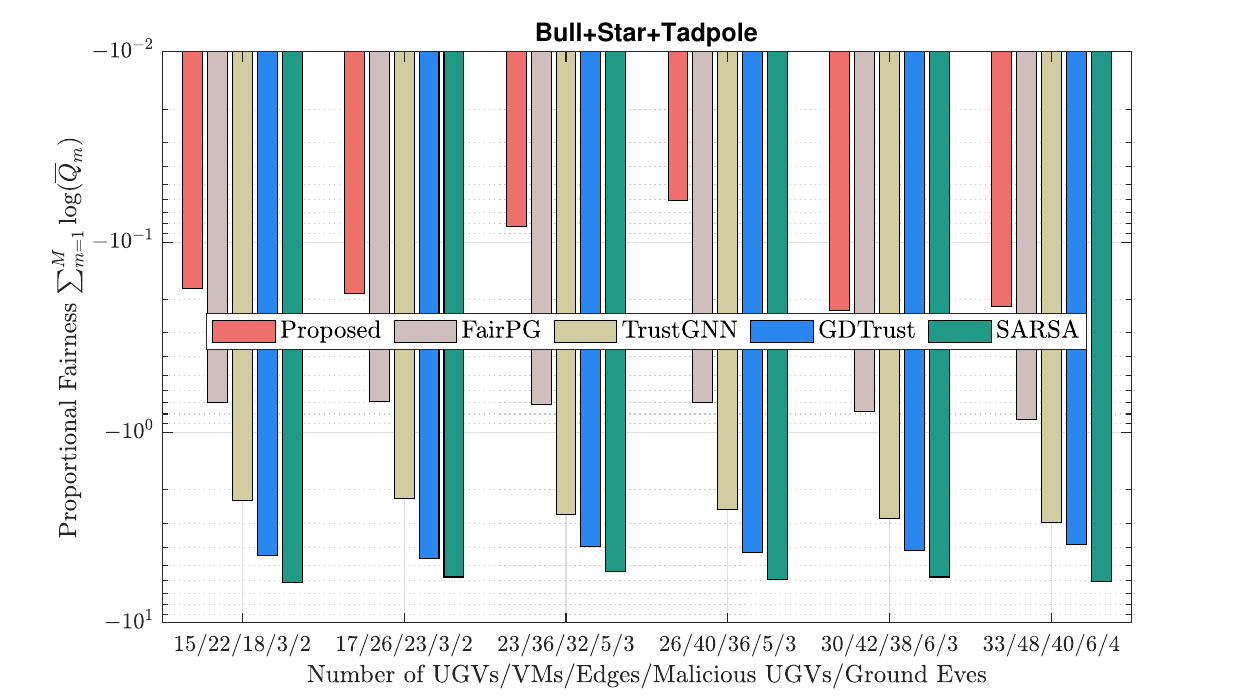}\label{fig:pictures4a}}
\subfigure[]{\includegraphics[width=0.39\textwidth]{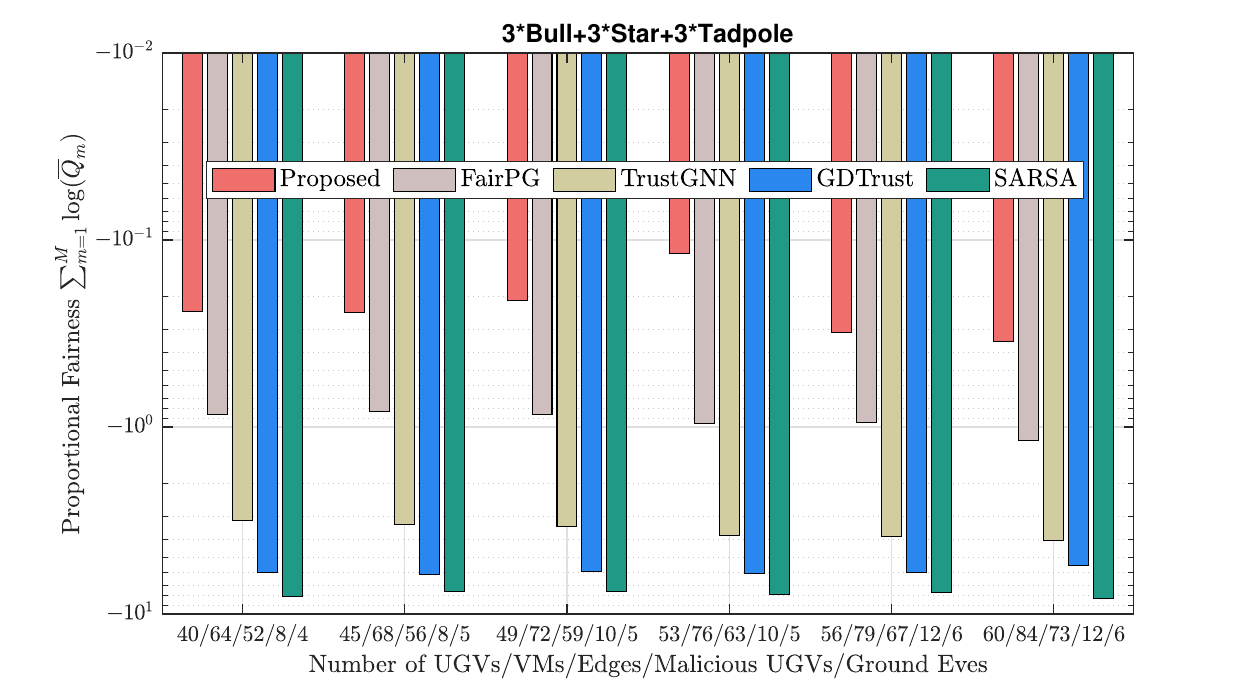}\label{fig:pictures4b}}
\subfigure[]{\includegraphics[width=0.39\textwidth]{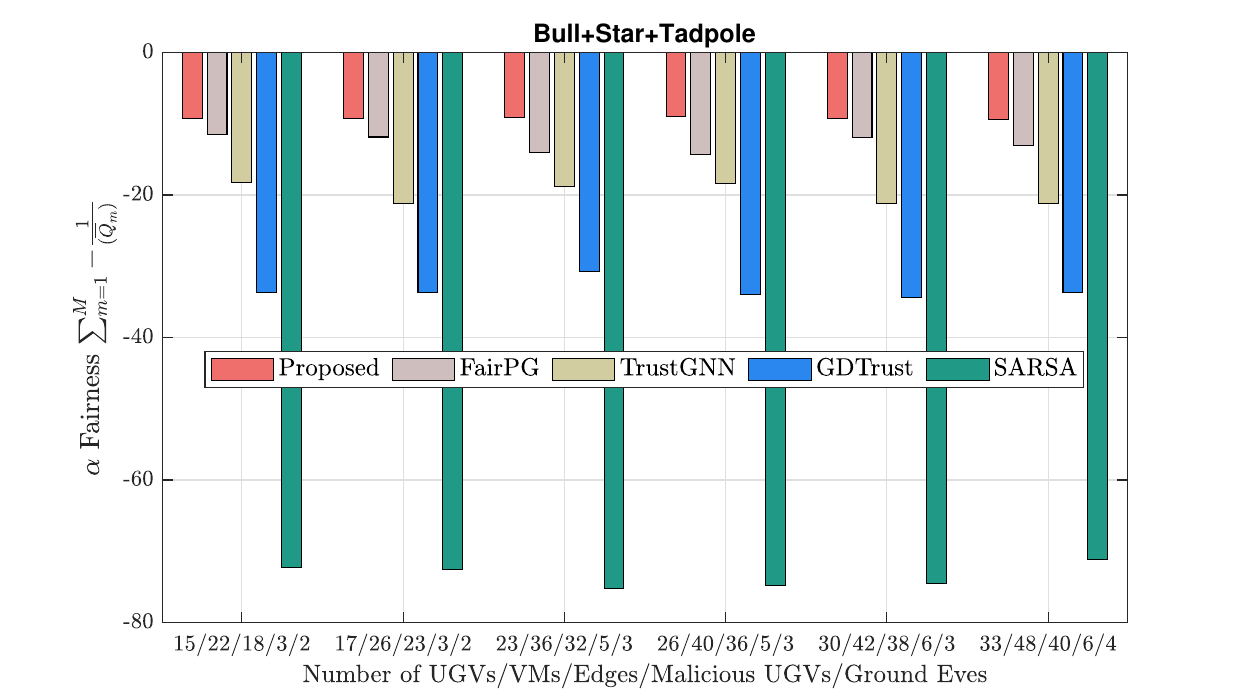}\label{fig:pictures4c}}
\subfigure[]{\includegraphics[width=0.39\textwidth]{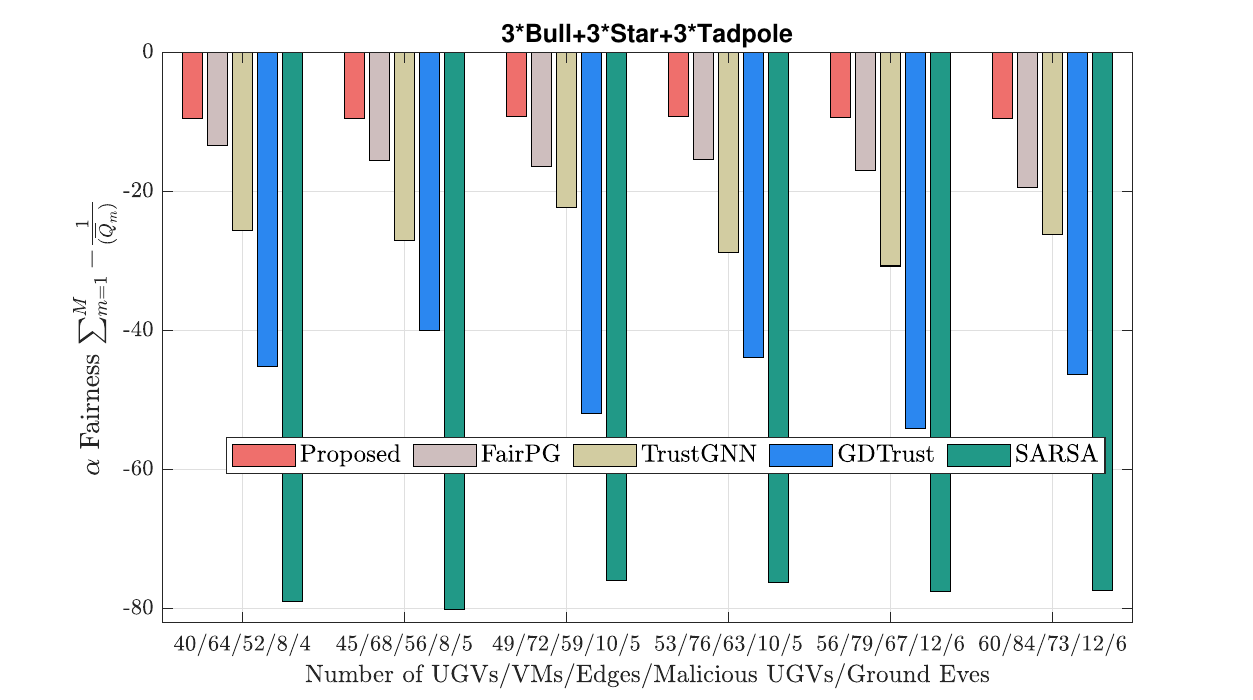}\label{fig:pictures4d}}
\caption{Fairness utility across problem sizes: (a)--(b) proportional fairness; (c)--(d) $\alpha$-fairness under small and large settings.}
\label{fig:pictures4}
\end{figure*}
Fig.~\ref{fig:convergence} evaluates convergence over 30,000 training steps with six task requestors, comprising two Bull, two Star, and two Tadpole tasks. For $\alpha=1$, (\ref{eqproblem}) reduces to proportional fairness optimization, $\sum_{m=1}^{M}\log(\overline{Q}_m)$. As shown in Fig.~\ref{fig:convergence1}, the proposed strategy converges within approximately 10,000 steps and stabilizes around $-0.2$. TrustGNN and GDTrust converge more slowly due to limited temporal trust modeling, while SARSA performs poorly without fairness consideration. FairPG also converges more slowly because of its Monte Carlo-based policy-gradient estimation.

We further evaluate the effectiveness of the proposed strategy to optimize other fairness objectives. For $\alpha=2$, the objective becomes
$\sum_{m=1}^{M}-1/\overline{Q}_m$, which penalizes low-utility tasks more heavily. Fig.~\ref{fig:convergence2} exhibits a similar trend: the proposed strategy achieves the fastest and most stable convergence, whereas FairPG converges more slowly and the remaining baselines achieve lower utility.

Fig.~\ref{fig:convergence3} further shows convergence across different task graph types. Bull tasks converge faster because of higher connectivity, whereas Star and Tadpole tasks initially exhibit lower or more fluctuating utility; nevertheless, their utilities gradually converge under the fairness-aware policy.

\begin{figure*}[!ht]
\centering  
\subfigure[]{\includegraphics[width=0.41\textwidth]{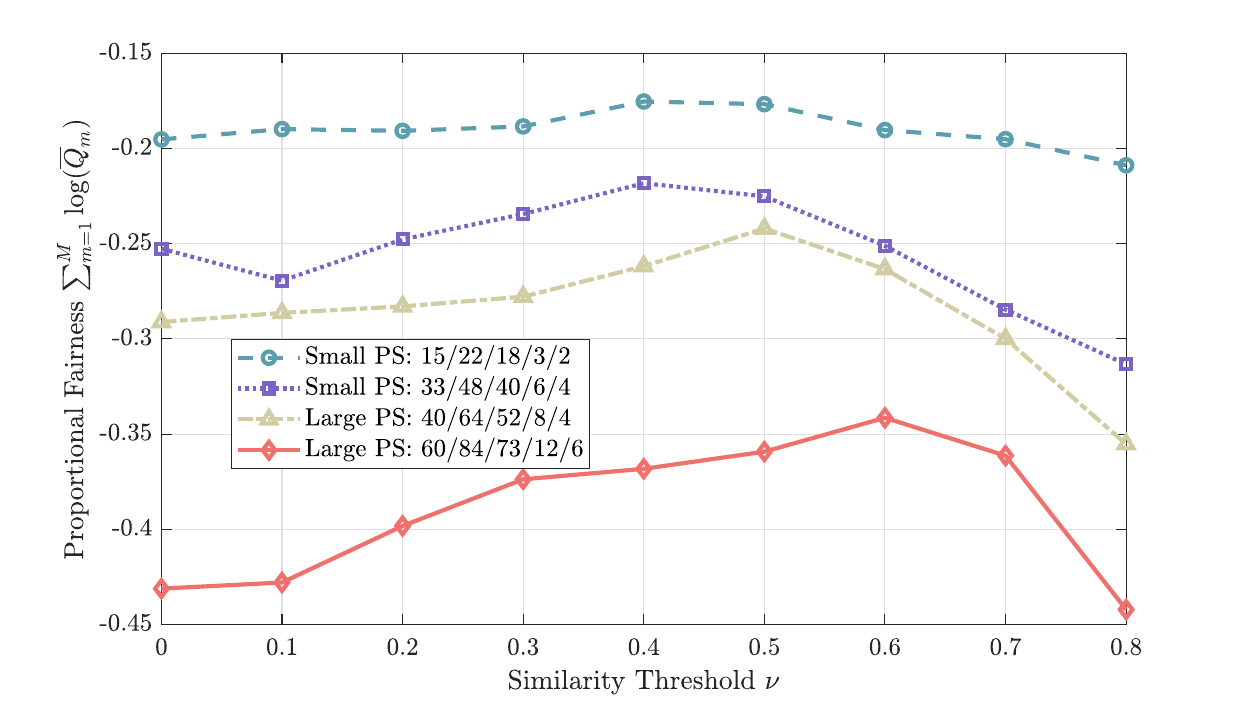}\label{fig:pictures5a}}
\subfigure[]{\includegraphics[width=0.41\textwidth]{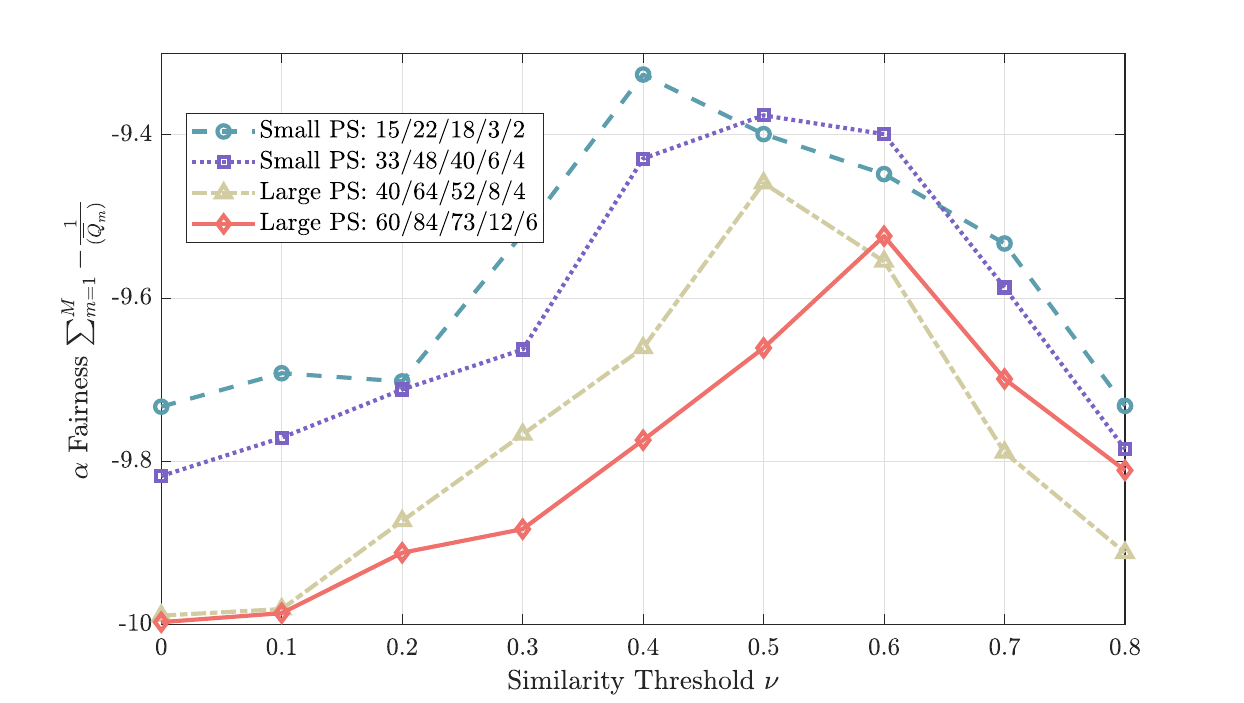}\label{fig:pictures5b}}
\caption{Impact of similarity threshold $\nu$ under different problem sizes: (a) proportional fairness utility; (b) $\alpha$-fairness utility.}
\label{fig:pictures5}
\end{figure*}

\subsection{Robustness and Parameter Sensitivity Analysis (Q2)}
We evaluate robustness across problem sizes, similarity thresholds, and fairness objectives. Task completion goals are randomly generated, with results averaged over 10,000 and 15,000 steps for proportional and $\alpha$-fairness utilities, respectively.

Fig.~\ref{fig:pictures4} shows that the proposed strategy consistently achieves the highest fairness utility across both small and large problem sizes and under both fairness objectives. TrustGNN, GDTrust, and SARSA perform worse because they do not jointly account for long-term fairness, dynamic trust, and hierarchical matching. Although FairPG explicitly optimizes fairness, its Monte Carlo-based policy gradient converges more slowly. These results demonstrate the scalability and robustness of the proposed hierarchical framework.

Fig.~\ref{fig:pictures5} evaluates the impact of the similarity threshold $\nu$ under different fairness objectives and problem sizes. Thresholds around 0.4--0.6 consistently achieve strong performance: smaller values retain more unrelated graph information and increase training overhead, whereas larger values prune relevant task relationships and reduce utility. We therefore set $\nu=0.5$ to balance prediction accuracy and computational efficiency.

\subsection{Attack Resistance Analysis (Q3)}
We evaluate blackhole, on-off, and eavesdropping attacks using 60 UGVs, 84 VMs, and 73 edges under the 3*Bull+3*Star+3*Tadpole setting. Fairness utilities are averaged over 10,000 and 15,000 steps for proportional fairness (PF) utility and $\alpha$-fairness (AF) utility, respectively. Attack resistance is measured by the fraction of malicious UGVs successfully avoided, while secrecy rate quantifies secure task-data transmission in the presence of Eves.
\begin{figure*}[!ht]
\centering  
\subfigure[]{\includegraphics[width=0.31\textwidth]{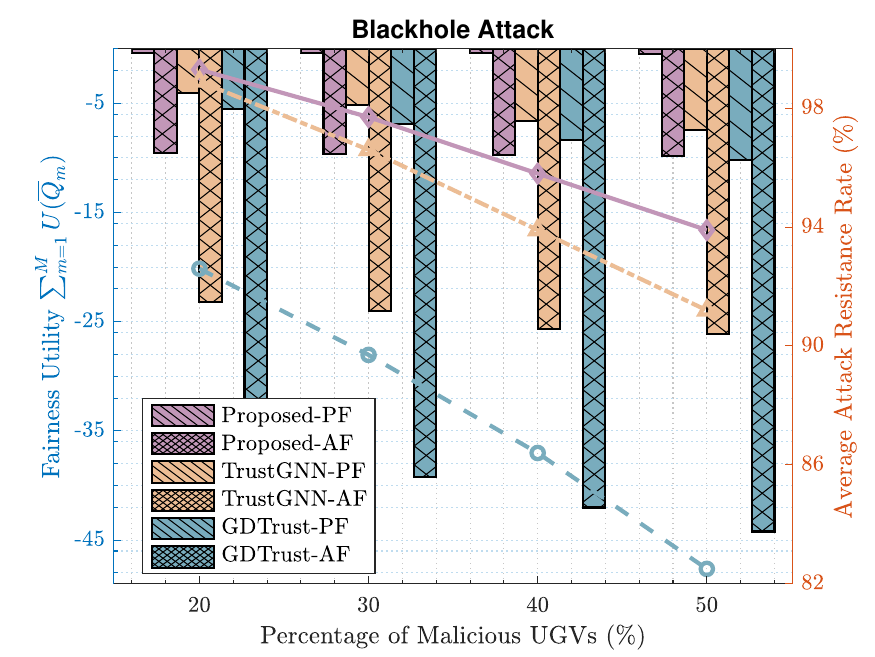}\label{fig:blackhole}}
\subfigure[]{\includegraphics[width=0.31\textwidth]{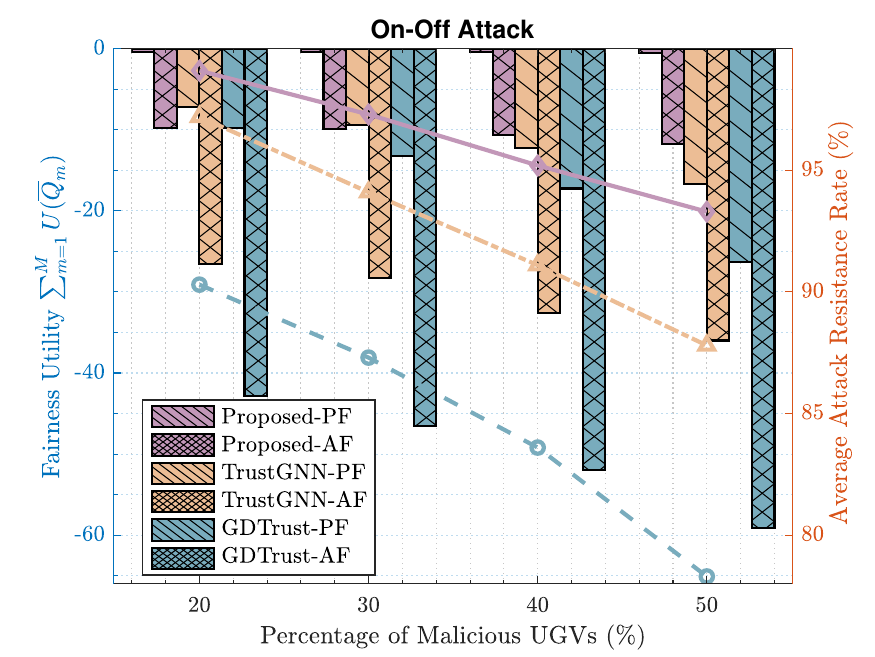}\label{fig:onoffattack}}
\subfigure[]{\includegraphics[width=0.31\textwidth]{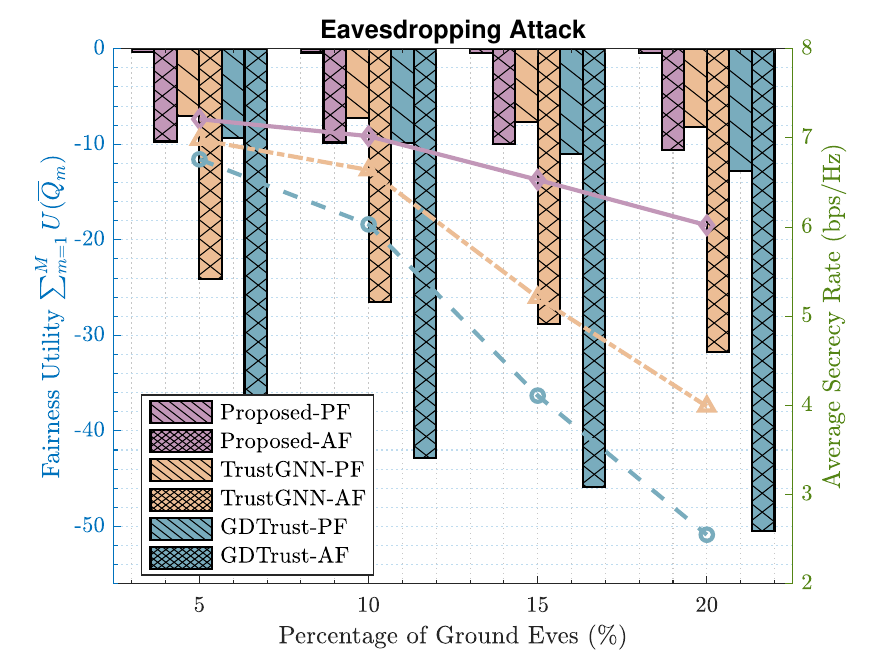}\label{fig:evesattack}}
\caption{Attack robustness under (a) blackhole, (b) on-off, and (c) eavesdropping attacks. In each subfigure, bars represent the proportional and \(\alpha\)-fairness utility values achieved by different strategies, while lines indicate average attack resistance and secrecy rates.}
\label{fig:attackanalysis}
\end{figure*}

For blackhole and on-off attacks, the malicious-UGV ratio is increased to 50\%. As shown in Figs.~\ref{fig:blackhole} and~\ref{fig:onoffattack}, on-off attacks are more disruptive because intermittent malicious behavior reduces detectability and allows attackers to maintain deceptive trust values. Nevertheless, the proposed strategy degrades more gracefully than TrustGNN and GDTrust and maintains approximately 93\% attack resistance even with 50\% on-off attackers.

Fig.~\ref{fig:evesattack} shows that secrecy rates decrease as the number of ground Eves increases from 3 to 12. The proposed strategy nevertheless maintains higher fairness utility and stronger resilience than the baselines because its trust-aware adaptive policy reduces the impact of unreliable collaborators and compromised links.

\section{Conclusion\label{sectionv}}
This paper has proposed a fairness-aware concurrent task orchestration strategy for trusted collaborator group formation and task allocation in resource-constrained collaborative systems. A unified task utility model captures the value–risk tradeoff of task completion, based on which concurrent task orchestration is formulated as a long-term average utility maximization problem under $\alpha$-fairness. To solve this problem, a hierarchical coarse-to-fine matching framework integrates task-specific trust prediction via spatial-temporal GNNs with fairness-aware DRL. The predicted trust values provide abstracted state representations for matching decisions, while fairness-aware rewards balance long-term task utilities under resource contention. Simulation results demonstrate improved task completion performance over representative baselines and consistent robustness under varying problem sizes, fairness objectives, and attack conditions. In particular, the proposed strategy improves concurrent task completion performance by 22\% over representative baselines and maintains an average attack resistance rate of approximately 93\% even when 50\% of UGVs launch on-off attacks.

\bibliographystyle{IEEEtran}
\bibliography{IEEEabrv,reference}

\end{document}